\def\BibTeX{{\rm B\kern-.05em{\sc i\kern-.025em b}\kern-.08em
    T\kern-.1667em\lower.7ex\hbox{E}\kern-.125emX}}
\newtheorem{claim}{Claim}
\newtheorem{corollary}{Corollary}
\newtheorem{theorem}{Theorem}
\newtheorem{lemma}{Lemma}
\newtheorem{remark}{Remark}
\title{Approximate Feedback Capacity of the Gaussian Multicast Channel}
\author{Changho Suh, Naveen Goela, Michael Gastpar \\
\thanks{C. Suh is with the Research Laboratory of Electronics at Massachusetts Institute of Technology, Cambridge, USA (e-mail: $\sf{chsuh@mit.edu}$).}
\thanks{N. Goela and M. Gastpar are with the School of Computer and Communication Sciences, Ecole Polytechnique F\'ed\'erale (EPFL), Lausanne, Switzerland (e-mail: $\mathsf{ \{naveen.goela, michael.gastpar \}@epfl.ch }$). They are also with the Department of Electrical Engineering and Computer Sciences, University of California, Berkeley, Berkeley, USA (Email: ${\sf \{ngoela,michael.gastpar\}@eecs.berkeley.edu}$)}
}
\begin{document}

\maketitle

\begin{abstract}
We characterize the capacity region to within $\log \left\{ 2(M-1) \right\}$ bits/s/Hz for the $M$-transmitter $K$-receiver Gaussian multicast channel with feedback where each receiver wishes to decode every message from the $M$ transmitters. Extending Cover-Leung's achievable scheme intended for $(M,K)=(2,1)$, we show that this generalized scheme achieves the cutset-based outer bound within $\log \left\{ 2(M-1) \right\}$ bits per transmitter for all channel parameters. In contrast to the capacity in the non-feedback case, the feedback capacity improves upon the naive intersection of the feedback capacities of $K$ individual multiple access channels. We find that feedback provides \emph{unbounded multiplicative} gain at high signal-to-noise ratios as was shown in the Gaussian interference channel. To complement the results, we establish the exact feedback capacity of the Avestimehr-Diggavi-Tse (ADT) deterministic model, from which we make the observation that feedback can also be beneficial for \emph{function computation}.
\end{abstract}
\begin{keywords}
ADT Deterministic Model, Feedback Capacity, Function Computation, Gaussian Multicast Channel
\end{keywords}

\section{Introduction}

While feedback plays a significant role in improving the reliability of communication systems~\cite{SK:it}, a traditional viewpoint on feedback capacity has been pessimistic over the past few decades. This is mainly due to Shannon's original result on feedback capacity which shows that feedback provides no increase in capacity for discrete memoryless point-to-point channels~\cite{shannon:it}. For multiple-access channels (MACs), feedback can increase the capacity~\cite{Gaarder:it}; however, the increase in capacity for Gaussian MACs is bounded by $1$ bit for all channel parameters~\cite{Ozarow:it}.

In contrast to these results, recent research shows that feedback provides more significant gain for communication over interference channels~\cite{Kramer:it02,SuhTse,GastparAmosYossefWigger}. Interestingly, the feedback gain is shown to be unbounded for certain channel parameters; i.e., the gap between the feedback and non-feedback capacities can be arbitrarily large as the signal-to-noise ratio ($\sf SNR$) of each link increases. One distinction of interference channels with respect to MACs is that each receiver decodes its desired message in the presence of undesired interfering signals. A natural question to ask is whether \emph{feedback} gain depends crucially on the presence of \emph{interference}.

In this paper, we make progress towards addressing this question. To isolate the interference issue, we start with a Gaussian MAC with two transmitters and feedback. We then modify the channel by adding additional receivers with identical message demands and feedback links from those receivers to the two transmitters. We call the new channel the two-transmitter, $K$-receiver Gaussian multicast channel with feedback.\footnote{In the non-feedback case, this channel is well known as the compound MAC~\cite{Maric:isit05,Simeone:it09}. However, this name is not appropriate in the feedback case. The compound MAC has a single physical receiver which can feed back only one of the possible candidates of the received signals experiencing different channel states. In our model, on the other hand, all of the received signals can be fed back. } Note that this channel does not pose any interference while still maintaining the many-to-many structure of interference channels. We present a coding scheme for this channel which generalizes Cover-Leung's scheme (intended for MACs)~\cite{Cover:it81}, and achieves rates within $1$ bit/s/Hz per transmitter of the cut-set outer bound for all channel parameters. We further extend our results for the case of $M$-transmitters and approximate the feedback capacity within $\log \left\{ 2(M-1) \right\}$ bits/s/Hz per transmitter. We find that feedback can provide multiplicative gain in the high-$\sf SNR$ regime, and that feedback is useful not only for mitigating interference~\cite{SuhTse}, but also for providing qualitatively-similar gains for channels with a many-to-many structure. In particular, we find that the feedback capacity region strictly enlarges the intersection of the feedback capacity regions of $K$ individual MACs. This is in contrast to the non-feedback case where the capacity region is simply the intersection of $K$ individual MAC capacity regions.

To complement our results on approximate feedback capacity, we establish the exact feedback capacity region of the Avestimehr-Diggavi-Tse (ADT) deterministic model. As a by-product, we also find that feedback increases the achievable rates for function computation. Specifically using a two-transmitter two-receiver example where each receiver wants to reconstruct a modulo-$2$ sum of two independent Bernoulli sources generated at the two transmitters, we demonstrate that feedback can increase the non-feedback computing capacity.

{\bf Related Work:} Feedback strategies for MACs were studied previously in~\cite{Cover:it81,Willems:it82,BrossLapidoth:it05,Venkataramanan:it11,Ozarow:it,Kramer:it02}. For the two-user case, Cover and Leung~\cite{Cover:it81} developed an achievable scheme that employs block Markov encoding and a decode-and-forward scheme. Willems~\cite{Willems:it82} proved the optimality of this scheme for a class of deterministic channels.
Ozarow~\cite{Ozarow:it} established the exact feedback capacity region using a different approach based on Schalkwijk-Kailath's scheme~\cite{SK:it}. Kramer developed more generic techniques of the approach and extended the result to include an arbitrary number of transmitters~\cite{Kramer:it02}. In the present paper, we generalize Cover-Leung's scheme to approximate the feedback capacity region of the $M$-transmitter $K$-receiver Gaussian multicast channel with an additional $(K-1)$ receivers as well as corresponding feedback links from those receivers to the $M$ transmitters.

The two-user compound MAC with conference encoders~\cite{Maric:isit05} or decoders~\cite{Simeone:it09} is also partially related to our work in the sense that dependence between the transmitted signals (or received signals) can be created through conferencing encoders (or decoders). However, we find a significant distinction. In the conferencing encoder problem~\cite{Maric:isit05,Willems:it83}, the capacity region is shown to be the intersection of the capacity regions of individual MACs. Similar behaviors follow for a class of conferencing decoder problems~\cite{Simeone:it09}. In contrast, we find that the feedback capacity region of our multicast channel enlarges the intersection of the feedback capacity regions of individual MACs.

Recently, Lim-Kim-El Gamal-Chung developed an achievable scheme for discrete memoryless networks~\cite{LimKimElGamalChung:it11}, and demonstrated the approximate optimality of their scheme for multi-source Gaussian multicast networks. Our feedback channel with unfolding can be cast into a multi-source Gaussian multicast network. However, we exploit the structure of our feedback channel to induce correlation between transmitters which leads to a tighter result. 

\section{Model}

\begin{figure}[t]
\begin{center}
{\epsfig{figure=./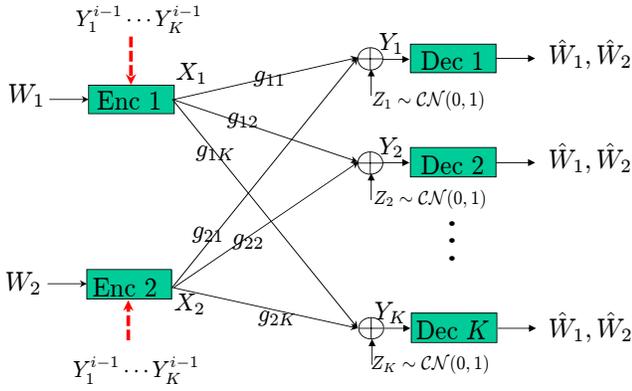, angle=0, width=0.46\textwidth}}
\end{center}
\caption{A Gaussian multicast channel with $M = 2$ transmitters and feedback from $K$ receivers.}
\label{fig:two-transmitterGaussianMulticast}
\end{figure}

We focus on the Gaussian multicast channel with $M = 2$ transmitters and $K$ receivers first. Section~\ref{sec:generalization} includes our results for $M > 2$. As shown in Fig.~\ref{fig:two-transmitterGaussianMulticast}, each receiver decodes all of the messages and is able to feed its received signal back to both transmitters. Without loss of generality, we normalize the transmit signal powers as $P_1 = P_2 = 1$ and channel noise powers as $Z_k \sim {\cal CN} (0,1)$ for all $k \in \{1,2, \ldots K\}$. Hence, the signal-to-noise ratio ($\sf SNR$) at each receiver captures the effect of the channel gains: ${\sf SNR}_{mk} \triangleq |g_{mk}|^2$, where $g_{mk} \in \mathbb{C}$ is the complex-valued channel gain from transmitter $m$ to receiver $k$.

Each transmitter $m \in \{1,2\}$ encodes an independent and uniformly distributed message $W_m \in \{1,2,\ldots,2^{NR_m}\}$. The encoded signal $X_{mi}$ of transmitter $m$ at time $i$ is a function of its own message and past feedback signals: $X_{mi} = f_{mi} \left(W_m, {Y}_{1}^{i-1}, \cdots, Y_{K}^{i-1}\right)$. We define ${Y}_k^{i-1} \triangleq \{Y_{kt}\}_{t=1}^{i-1}$ where $Y_{ki}$ is the received signal at receiver $k$ at time $i$. A rate pair $(R_1, R_2)$ is said to be achievable if there exists a family of codebooks subject to power constraints and corresponding encoding/decoding functions such that the average decoding error probabilities go to zero as the code length $N$ tends to infinity. The capacity region $\cal C$ is the closure of the set of the achievable rate pairs.


\section{Main Results}
\begin{theorem}[Inner Bound]
\label{thm:achievability}
The capacity region includes the set $\cal R$ of $(R_1,R_2)$ such that for $0 \leq \rho \leq 1$ and $\forall k$,
\begin{align}
\label{eq:achievablerate1}
R_1 &\leq \log \left( 1 + (1-\rho) \sum_{i=1}^{K} {\sf SNR}_{1i} \right ) \\
\label{eq:achievablerate2}
R_2 &\leq \log \left( 1 + (1-\rho) \sum_{i=1}^{K} {\sf SNR}_{2i} \right ) \\
\label{eq:achievablerate12}
R_1 + R_2 & \leq \log \left( 1 + {\sf SNR}_{1k} + {\sf SNR}_{2k} + 2 \rho \sqrt{{\sf SNR}_{1k} \cdot {\sf SNR}_{2k}}   \right).
\end{align}
\end{theorem}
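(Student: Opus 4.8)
The plan is to generalize the Cover–Leung block-Markov / decode-and-forward scheme from the two-user MAC to the $K$-receiver multicast setting. I would use $B$ blocks each of length $n$, sending $B-1$ fresh message pairs so that the rate loss is negligible as $B \to \infty$. In block $j$, transmitter $m$ superposes a "fresh" codeword carrying the new message $w_m^{(j)}$ on top of a "cooperative" codeword that both transmitters can generate jointly: after block $j-1$, each receiver has (with high probability, by the end-of-block decoding described below) recovered the pair $(w_1^{(j-1)}, w_2^{(j-1)})$, and crucially \emph{each transmitter}, from its own feedback $Y_k^{i-1}$, can also decode the partner's message $w_{\bar m}^{(j-1)}$ at the end of block $j-1$ (this is where the feedback is used, exactly as in Cover–Leung). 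Hence in block $j$ both transmitters share the common index $(w_1^{(j-1)}, w_2^{(j-1)})$ and can transmit a common "resolution" codeword $U^n$; the correlation parameter $\rho$ governs the power split between $U$ and the fresh codewords $V_1, V_2$, with $X_m = \sqrt{\rho}\,U + \sqrt{1-\rho}\,V_m$ so that $\mathbb{E}[X_1 X_2^*] = \rho$ in the relevant sense.

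The decoding at receiver $k$ is backward (from block $B$ down to block $1$), or equivalently a sliding-window joint-typicality decoder over two consecutive blocks: using its knowledge of the previous block's messages (hence of $U^n$ in the current block), receiver $k$ jointly decodes $(w_1^{(j)}, w_2^{(j)})$ from $V_1, V_2$. The MAC-type Fano / joint-typicality analysis at receiver $k$ then yields the three familiar constraints but with the residual (interference-free, since $U$ is stripped) powers $(1-\rho){\sf SNR}_{mk}$ for the individual-rate bounds — and here is the one genuinely new ingredient relative to plain Cover–Leung: because \emph{every} receiver feeds back and every transmitter can exploit all $K$ feedback links, the fresh codeword $V_m$ is effectively received through a virtual parallel channel whose total SNR is $\sum_{i=1}^K {\sf SNR}_{mi}$; beam-forming / distributing the fresh signal appropriately across what the receivers observe gives the $R_m \le \log(1 + (1-\rho)\sum_i {\sf SNR}_{mi})$ bound in \eqref{eq:achievablerate1}–\eqref{eq:achievablerate2}. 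The sum-rate bound \eqref{eq:achievablerate12} is the ordinary single-receiver MAC bound $I(V_1,V_2;Y_k \mid U)$ with Gaussian inputs and correlation $\rho$ between $X_1,X_2$, giving the $2\rho\sqrt{{\sf SNR}_{1k}{\sf SNR}_{2k}}$ cross term, and must hold at every $k$ since every receiver must decode both messages.

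I would then finish by checking the error probability: a union bound over the $B$ blocks and over the (finitely many) message-pair error events, each controlled by the above rate constraints via the AEP, shows $P_e \to 0$ as $n \to \infty$ and then $B \to \infty$ for every fixed $\rho \in [0,1]$; taking the union over $\rho$ and the closure gives $\mathcal{R} \subseteq \mathcal{C}$.

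The main obstacle I expect is making precise the claim that each transmitter can reliably decode the other's message from feedback at the end of each block \emph{simultaneously with} all $K$ receivers decoding both messages — i.e., choosing the power split and the auxiliary rates so that the transmitter-side decoding constraint (which sees only the feedback, effectively one "link" per transmitter) is implied by, or at least compatible with, the receiver-side constraints, and handling the first and last blocks cleanly. In the continuous Gaussian setting one must also be careful that the superposition $X_m = \sqrt{\rho}\,U + \sqrt{1-\rho}\,V_m$ respects the unit power constraint and that the backward-decoding typicality argument is stated with the correct conditioning; these are routine but need the bookkeeping to line up so that exactly the three displayed inequalities emerge with no extra constraints.
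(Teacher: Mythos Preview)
Your overall architecture---block-Markov encoding, Cover--Leung superposition with a common codeword $U$ carrying the previous message pair, backward decoding at the receivers---matches the paper's. But you have misidentified where the individual-rate bounds~(\ref{eq:achievablerate1})--(\ref{eq:achievablerate2}) come from, and the mechanism you sketch would not produce them. You write that the ``MAC-type analysis at receiver $k$ yields the three familiar constraints'' and then invoke ``beam-forming / distributing the fresh signal'' to upgrade the individual bound to $\sum_i {\sf SNR}_{mi}$; later you describe the transmitter-side decoding as seeing ``effectively one link per transmitter.'' Both of these are backwards. Each transmitter in this model receives feedback from \emph{all} $K$ receivers, so when transmitter~$2$ decodes $w_1^{(b-1)}$ at the end of block $b-1$ it has $(Y_1^N,\ldots,Y_K^N)$ together with its own $(X_2^N,U^N)$; the packing-lemma constraint is therefore $R_1 \le I(X_1;Y_1,\ldots,Y_K \mid X_2,U)$, which with your Gaussian inputs evaluates directly to $\log\bigl(1+(1-\rho)\sum_i {\sf SNR}_{1i}\bigr)$. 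No beam-forming is involved (the inputs are scalar), and this bound has nothing to do with receiver-side decoding.

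Conversely, the receiver-side backward decoding yields \emph{only} the sum-rate constraint~(\ref{eq:achievablerate12}), not three MAC constraints. The reason is that the cooperative codeword $u^N(j,l)$ is indexed by \emph{both} previous messages, so any error event with $j\neq 1$ or $l\neq 1$ makes $u^N$---and hence both $x_1^N$ and $x_2^N$, which are generated conditionally on it---independent of the true triple; all error types are then dominated by $2^{-N I(U,X_1,X_2;Y_k)}$, giving just $R_1+R_2 \le I(X_1,X_2;Y_k)$. If you actually ran a MAC analysis at each receiver as you describe, you would pick up the unwanted constraints $R_m \le \log(1+(1-\rho){\sf SNR}_{mk})$ for every $k$, which are strictly tighter than the theorem's bounds and would not yield the region~$\mathcal{R}$. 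So the ``extra constraints'' you worry about in your last paragraph are avoided precisely by the codebook structure, and the sum-${\sf SNR}$ bounds you want are exactly the transmitter-side constraints you dismissed as single-link.
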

\begin{proof}
See Section~\ref{sec:achievability}.
\end{proof}
\begin{remark}
\label{remark:largerthaninsection}
We compare this to the naive rate region which is the intersection of the feedback capacity regions of individual MACs:
\begin{align*}
{\cal R}_{\sf naive} = \bigcup_{ 0 \leq \rho \leq 1} \bigcap_{k=1}^{K} {\cal C}_k^{{\sf MAC}} (\rho), \end{align*}
where ${\cal C}_k^{{\sf MAC}} (\rho)$ denotes the feedback capacity region of the Gaussian MAC for receiver $k$, given $\rho$~\cite{Ozarow:it}. Note that the intersection constrains individual rate bounds, thus reducing the rate region. On the other hand, our rate region contains no such individual rate bounds, thus improving upon ${\cal R}_{\sf naive}$. This is in contrast to the nonfeedback case and the compound MAC case with encoders~\cite{Maric:isit05} (or decoders~\cite{Simeone:it09}),
where the capacity region is simply the intersection of individual MAC capacity regions. $\Box$
\end{remark}

\begin{theorem}[Outer Bound]
\label{thm:outerbound}
The capacity region is included by the set $\cal{\bar{C}}$ of $(R_1,R_2)$ such that for $0 \leq \rho \leq 1$ and $\forall k$,
\begin{align}
\label{eq:outerbound1}
R_1 &\leq \log \left( 1 + (1-\rho^2) \sum_{i=1}^{K} {\sf SNR}_{1i} \right ) \\
\label{eq:outerbound2}
R_2 &\leq \log \left( 1 + (1-\rho^2) \sum_{i=1}^{K} {\sf SNR}_{2i} \right ) \\
\label{eq:outerbound12}
R_1 + R_2 & \leq \log \left( 1 + {\sf SNR}_{1k} + {\sf SNR}_{2k} + 2 \rho \sqrt{{\sf SNR}_{1k} \cdot {\sf SNR}_{2k}}   \right).
\end{align}
\end{theorem}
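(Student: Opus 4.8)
The plan is to derive the three families of inequalities from a cut-set / genie-aided argument, combined with the feedback-correlation bookkeeping of Ozarow, but adapted to the cooperative (receive-beamforming) structure forced by the multicast demand. Write $P_{mi} \triangleq E[|X_{mi}|^2]$ and, for each time $i$, let $\rho_i \triangleq |E[X_{1i}X_{2i}^*]|/\sqrt{P_{1i}P_{2i}} \in [0,1]$ be the magnitude of the normalized correlation between the two inputs. The idea is to keep the per-symbol parameters $\{\rho_i\}$ explicit throughout the individual- and sum-rate derivations, and only at the very end collapse them into a single $\rho$.

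First I would invoke Fano: since every receiver must decode both messages, $H(W_1,W_2\mid Y_k^N)\le N\epsilon_N$ for each $k$ with $\epsilon_N\to 0$; in particular $H(W_1\mid Y_1^N,\dots,Y_K^N,W_2)\le N\epsilon_N$. For the individual bound on $R_1$, start from $NR_1\le I(W_1;Y_1^N,\dots,Y_K^N\mid W_2)+N\epsilon_N$ and apply the chain rule over time. Conditioned on $(W_2,Y_1^{i-1},\dots,Y_K^{i-1})$ the input $X_{2i}=f_{2i}(W_2,Y_1^{i-1},\dots,Y_K^{i-1})$ is deterministic and can be subtracted off; conditioning in addition on $W_1$ also fixes $X_{1i}$, so the negative term reduces to $K\log(\pi e)$ per symbol since the noises at distinct receivers are independent. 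What survives is $\sum_i I\!\left(X_{1i};\,\{g_{1k}X_{1i}+Z_{ki}\}_{k=1}^{K}\mid W_2,Y_1^{i-1},\dots,Y_K^{i-1}\right)$, a conditional SIMO mutual information which, since the Gaussian input maximizes it for a fixed second moment, is at most $\sum_i\log\!\big(1+\mathrm{Var}(X_{1i}\mid W_2,Y_1^{i-1},\dots,Y_K^{i-1})\sum_{k}{\sf SNR}_{1k}\big)$ by the determinant identity $\det(I+vv^{\dagger}\sigma^2)=1+\sigma^2\|v\|^2$. Using that $X_{2i}$ is measurable with respect to the conditioning together with Cauchy--Schwarz gives $E[\mathrm{Var}(X_{1i}\mid W_2,Y_1^{i-1},\dots,Y_K^{i-1})]\le P_{1i}(1-\rho_i^2)$, and one Jensen step (concavity of $\log$) puts the bound in the desired form; $R_2$ is symmetric.

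Next, for the sum bound at receiver $k$, start from $N(R_1+R_2)\le I(W_1,W_2;Y_k^N)+N\epsilon_N$ and expand. The negative term $h(Y_{ki}\mid Y_k^{i-1},W_1,W_2)$ is lower-bounded by further conditioning on $Y_1^{i-1},\dots,Y_K^{i-1}$, which makes both $X_{1i},X_{2i}$ deterministic, so it equals $\log(\pi e)$; the positive term is at most $\log(\pi e\,E|Y_{ki}|^2)$ with $E|Y_{ki}|^2\le 1+{\sf SNR}_{1k}P_{1i}+{\sf SNR}_{2k}P_{2i}+2\rho_i\sqrt{{\sf SNR}_{1k}{\sf SNR}_{2k}P_{1i}P_{2i}}$, by the definition of $\rho_i$ and $|\mathrm{Re}(\cdot)|\le|\cdot|$. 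Dividing by $N$ and applying Jensen yields a sum-rate bound controlled by $\tfrac1N\sum_i \rho_i\sqrt{P_{1i}P_{2i}}$. The final step ties the pieces together: set $\bar P_m\triangleq\tfrac1N\sum_i P_{mi}\le 1$ and $\rho\triangleq \big(\tfrac1N\sum_i\rho_i\sqrt{P_{1i}P_{2i}}\big)/\sqrt{\bar P_1\bar P_2}$; one Cauchy--Schwarz shows $\rho\in[0,1]$, and two more give $\tfrac1N\sum_i P_{mi}(1-\rho_i^2)\le \bar P_m(1-\rho^2)\le 1-\rho^2$ and $\tfrac1N\sum_i\rho_i\sqrt{P_{1i}P_{2i}}=\rho\sqrt{\bar P_1\bar P_2}\le\rho$, so this single $\rho$ simultaneously validates \eqref{eq:outerbound1}, \eqref{eq:outerbound2}, and \eqref{eq:outerbound12}; letting $N\to\infty$ finishes.

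The main obstacle I anticipate is not any individual inequality but the \emph{consistency} requirement: the same correlation parameter $\rho$ must serve both as the power-shrinkage factor $(1-\rho^2)$ in the cooperative (SIMO) individual bounds and as the receive-beamforming gain $2\rho\sqrt{{\sf SNR}_{1k}{\sf SNR}_{2k}}$ in every per-receiver sum bound. This is precisely what forces the particular normalization of $\rho$ and the two Cauchy--Schwarz estimates of the last step. A secondary technical point is that, because each encoder observes \emph{all} $K$ feedback links, the genie in the individual-bound step must carry the full history $Y_1^{i-1},\dots,Y_K^{i-1}$ for $X_{2i}$ to become deterministic, whereas in the sum-bound step one needs only receiver $k$'s own past for the positive term but the full history again to pin down the negative term.
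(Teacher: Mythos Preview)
Your proposal is correct and follows essentially the same cut-set argument as the paper: Fano's inequality, chain rule over time, making $X_{2i}$ deterministic by conditioning on $(W_2,Y_1^{i-1},\dots,Y_K^{i-1})$, and then bounding the resulting (conditional) entropies via Gaussian maximization. The one place where you are more careful than the paper is the per-symbol correlation bookkeeping: the paper simply writes ``assume $E[X_1X_2^*]=\rho$'' and proceeds as if a single time-invariant $\rho$ governs every symbol, whereas you track $\rho_i$ explicitly and close with the Cauchy--Schwarz step $\frac{1}{N}\sum_i P_{mi}(1-\rho_i^2)\le \bar P_m(1-\rho^2)$ to show that a \emph{single} $\rho\in[0,1]$ simultaneously validates all three bounds. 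That extra rigor is exactly what the paper's proof implicitly relies on but does not spell out.
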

\begin{proof}
See Section~\ref{sec:outerbound}.
\end{proof}
\begin{corollary}[One Bit Gap]
\label{cor:onebitgap}
The gap between the inner bound and outer bound regions given in Theorems~\ref{thm:achievability} and $\ref{thm:outerbound}$ is at most 1 bit/s/Hz/transmitter:
\begin{align*}
{\cal R} \subseteq {\cal C} \subseteq {\cal R} \oplus \left( [0,1] \times [0,1] \right).
\end{align*}
\end{corollary}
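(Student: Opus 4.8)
The plan is to collapse the two-sided sandwich to a single inclusion. The left containment $\mathcal{R} \subseteq \mathcal{C}$ is precisely Theorem~\ref{thm:achievability} and $\mathcal{C} \subseteq \bar{\mathcal{C}}$ is Theorem~\ref{thm:outerbound}, so the only thing left to prove is $\bar{\mathcal{C}} \subseteq \mathcal{R} \oplus ([0,1]\times[0,1])$. I would attack this parameter-by-parameter: write $\mathcal{R}(\rho)$ and $\bar{\mathcal{C}}(\rho)$ for the polytopes cut out by \eqref{eq:achievablerate1}--\eqref{eq:achievablerate12} and \eqref{eq:outerbound1}--\eqref{eq:outerbound12} at a fixed $\rho$, so that $\mathcal{R} = \bigcup_{\rho} \mathcal{R}(\rho)$ and $\bar{\mathcal{C}} = \bigcup_{\rho} \bar{\mathcal{C}}(\rho)$, and aim to show $\bar{\mathcal{C}}(\rho) \subseteq \mathcal{R}(\rho) \oplus ([0,1]\times[0,1])$ \emph{with the same correlation coefficient $\rho$ on both sides}. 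Since Minkowski addition distributes over unions, taking $\bigcup_\rho$ at the end then yields the corollary. Concretely, for a point $(R_1,R_2) \in \bar{\mathcal{C}}(\rho)$ the natural candidate in $\mathcal{R}(\rho)$ is the shifted-down point $\left(\max\{R_1-1,0\},\ \max\{R_2-1,0\}\right)$, which by construction differs from $(R_1,R_2)$ by a vector in $[0,1]\times[0,1]$, and I would verify that it satisfies the three inner-bound inequalities.

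The sum-rate constraints need no work: \eqref{eq:achievablerate12} and \eqref{eq:outerbound12} are literally the same inequality, and shifting rates downward only slackens it. The content sits in the individual-rate bounds, where the outer bound carries the factor $1-\rho^2$ and the inner bound the smaller factor $1-\rho$. Setting $S \triangleq \sum_{i=1}^{K}{\sf SNR}_{1i}$, the estimate I expect to need is
\begin{align*}
\log\left(1 + (1-\rho^2) S\right) - 1 \ \le\ \log\left(1 + (1-\rho) S\right),
\end{align*}
which, after exponentiating and clearing the factor of $2$, reduces to $-(1-\rho)^2 S \le 1$, true for every $\rho \in [0,1]$ and $S \ge 0$. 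Together with the trivial case $R_1 \le 1$ (where the shifted coordinate is $0$ and \eqref{eq:achievablerate1} is vacuous) and the identical argument for the second transmitter, this places the shifted point inside $\mathcal{R}(\rho)$.

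I do not anticipate a genuine obstacle here; the points to be careful about are minor. First, the truncation at $0$ in the shifted point is what keeps the candidate in the nonnegative orthant while still differing from $(R_1,R_2)$ by at most one bit per coordinate — one should note that $\mathcal{R}(\rho)$ is downward-closed so this truncation can only help. Second, it is worth emphasizing that one may hold $\rho$ fixed when passing from the outer region to the inner region rather than re-optimizing, which is what makes the argument clean. Should the same-$\rho$ comparison fail for some reason, the fallback would be to select, for each $(R_1,R_2)\in\bar{\mathcal{C}}(\rho)$, a possibly different $\rho'$ depending on $(R_1,R_2,\rho)$ and redo the estimates, but the inequality above indicates this is not needed.
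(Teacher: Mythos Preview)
Your proposal is correct and matches the paper's approach: the paper also fixes the same $\rho$ on both sides and bounds the per-constraint gaps, obtaining $\delta_1 := \eqref{eq:outerbound1}-\eqref{eq:achievablerate1} \leq \log(1+\rho) \leq 1$, $\delta_2 \leq 1$, and $\delta_{12}=0$. Your write-up is more explicit about the surrounding structure (the shifted candidate point, downward closure of $\mathcal{R}(\rho)$, Minkowski sums distributing over the union in $\rho$), but the mathematical content is identical.
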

\begin{proof}
The proof is immediate. Let $\delta_1 =(\ref{eq:outerbound1}) - (\ref{eq:achievablerate1})$. Similarly we define $\delta_2$ and $\delta_{12}$. Straightforward computation then gives
$\delta_1 \leq \log (1 + \rho) \leq 1$.
Similarly, we get $\delta_2 \leq 1$ and $\delta_{12} =0$. This completes the proof.
\end{proof}

\begin{figure}[t]
\begin{center}
{\epsfig{figure=./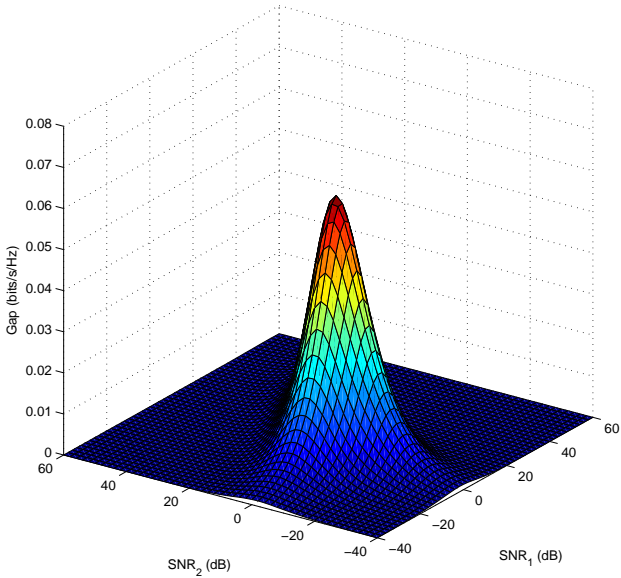, angle=0, width=0.5\textwidth}}
\end{center}
\caption{The gap between the symmetric-rate inner and outer bounds for a two-receiver symmetric channel setting: ${\sf SNR}_1 :={\sf SNR}_{11} ={\sf SNR}_{22}$ and ${\sf SNR}_2:= {\sf SNR}_{12} ={\sf SNR}_{21}$
} \label{fig:gap}
\end{figure}

\begin{remark}
\label{remark:onebitgap}
Fig.~\ref{fig:gap} shows a numerical result of the inner-and-upper bound gap for the symmetric capacity, denoted by $C_{\sf sym} = \sup \left \{R:  (R,R) \in {\cal C} \right \}$. For illustrative purpose, we consider a two-receiver symmetric channel setting where ${\sf SNR}_1 :={\sf SNR}_{11} ={\sf SNR}_{22}$ and ${\sf SNR}_2:= {\sf SNR}_{12} ={\sf SNR}_{21}$. While the worst-case gap is 1 bit due to the coarse analysis in Corollary~\ref{cor:onebitgap}, the actual gap is upper-bounded by approximately $0.08$ over a wide range of channel parameters. This suggests that a refined analysis could lead to an even smaller gap. For instance, in the high-${\sf SNR}$ regime, we obtain the asymptotic symmetric capacity as follows. $\Box$
\end{remark}
\begin{corollary}
\label{cor:asymptoticsymcapacity}
For a two-receiver symmetric channel setting, the symmetric capacity at the high $\sf SNR$ regime is
\begin{align}
\label{eq:aymptoticsymcapacity}
C_{\sf sym} \approx \frac{1}{2} \log \left (  {\sf SNR}_1 + {\sf SNR}_2 + 2 \sqrt{{\sf SNR}_1 \cdot {\sf SNR}_2} \right ).
\end{align}
\end{corollary}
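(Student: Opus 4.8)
The plan is to derive the asymptotic symmetric capacity by sandwiching $C_{\sf sym}$ between the symmetric-rate points of the inner region $\cal R$ (Theorem~\ref{thm:achievability}) and the outer region $\cal{\bar C}$ (Theorem~\ref{thm:outerbound}), and then showing that, at high $\sf SNR$, the two agree to within a bounded additive term that is negligible compared to $\log {\sf SNR}$. Concretely, setting $R_1 = R_2 = R$ and specializing to the two-receiver symmetric setting (${\sf SNR}_{11}={\sf SNR}_{22}={\sf SNR}_1$, ${\sf SNR}_{12}={\sf SNR}_{21}={\sf SNR}_2$), the relevant bounds collapse: the individual-rate constraints (\ref{eq:achievablerate1})--(\ref{eq:achievablerate2}) both read $R \le \log\!\left(1+(1-\rho)({\sf SNR}_1+{\sf SNR}_2)\right)$, and the sum-rate constraint (\ref{eq:achievablerate12}) reads $2R \le \log\!\left(1 + {\sf SNR}_1 + {\sf SNR}_2 + 2\rho\sqrt{{\sf SNR}_1 {\sf SNR}_2}\right)$. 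Hence the inner bound gives $C_{\sf sym} \ge \max_{0\le\rho\le1}\min\{\,\log(1+(1-\rho)({\sf SNR}_1+{\sf SNR}_2)),\ \tfrac12\log(1+{\sf SNR}_1+{\sf SNR}_2+2\rho\sqrt{{\sf SNR}_1{\sf SNR}_2})\,\}$, and the outer bound (Theorem~\ref{thm:outerbound}) gives the same expression with $(1-\rho)$ replaced by $(1-\rho^2)$ in the first term.

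Next I would analyze the $\max$-$\min$ at high $\sf SNR$. As $\rho \to 1$, the first (individual-rate) term behaves like $\log(1 + (1-\rho)({\sf SNR}_1+{\sf SNR}_2))$, which can be made large (order $\log {\sf SNR}$) as long as $1-\rho$ stays above $1/{\sf SNR}$; meanwhile the second (sum-rate) term increases monotonically in $\rho$ toward $\tfrac12\log(1 + {\sf SNR}_1 + {\sf SNR}_2 + 2\sqrt{{\sf SNR}_1{\sf SNR}_2})$. The optimal $\rho$ equalizes the two terms, and the key observation is that this equalizing $\rho$ tends to $1$ as the $\sf SNR$s grow, with $1-\rho = \Theta({\sf SNR}^{-1/2})$ or thereabouts — in any case fast enough that the value of the $\max$-$\min$ is dominated by the sum-rate term evaluated near $\rho=1$, up to an $O(1)$ error. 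The same reasoning applies verbatim to the outer bound, since $1-\rho^2 = (1-\rho)(1+\rho)$ differs from $1-\rho$ only by the bounded factor $(1+\rho) \in [1,2]$, contributing at most an additive constant inside the logarithm. Therefore both bounds converge to $\tfrac12\log\!\left({\sf SNR}_1 + {\sf SNR}_2 + 2\sqrt{{\sf SNR}_1{\sf SNR}_2}\right)$ as claimed, where I would absorb the additive $1$ inside the logarithm and the $O(1)$ slack into the ``$\approx$''.

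The main obstacle is making the high-$\sf SNR$ limit of the $\max$-$\min$ rigorous: one must verify that the optimizing $\rho$ genuinely approaches $1$ at the right rate and that neither term is the active constraint in a way that changes the leading-order $\tfrac12\log(\cdot)$ behavior — i.e., ruling out a regime where the individual-rate bound binds and produces a strictly smaller prefactor. This requires checking that $\log(1+(1-\rho)({\sf SNR}_1+{\sf SNR}_2))$ evaluated at the equalizing $\rho$ still grows like $\tfrac12\log {\sf SNR}$ rather than saturating. A clean way to handle this is to exhibit an explicit near-optimal choice such as $1-\rho = 1/\sqrt{{\sf SNR}_1+{\sf SNR}_2}$ (or a similar closed-form pick), plug it into both terms, and show directly that each equals $\tfrac12\log({\sf SNR}_1+{\sf SNR}_2+2\sqrt{{\sf SNR}_1{\sf SNR}_2}) + O(1)$; combined with the matching outer bound, this pins down the asymptotics and completes the proof.
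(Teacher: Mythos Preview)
Your proposal is correct and follows essentially the same approach as the paper: specialize the inner and outer bounds to the symmetric setting, treat the symmetric rate as a $\max$--$\min$ over $\rho$, and show that at high ${\sf SNR}$ both bounds are governed by the sum-rate constraint at $\rho\to 1$. The only difference is cosmetic: rather than your generic test point $1-\rho = 1/\sqrt{{\sf SNR}_1+{\sf SNR}_2}$, the paper writes down the (approximate) equalizing value $\rho^*_{\sf in}\approx \rho_{\sf out}^{*2}\approx 1-\sqrt{{\sf SNR}_1+{\sf SNR}_2+2\sqrt{{\sf SNR}_1{\sf SNR}_2}}\,/\,({\sf SNR}_1+{\sf SNR}_2)$, which makes the two terms match exactly rather than only to within $O(1)$; either choice suffices for the ``$\approx$'' conclusion.
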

\begin{proof}
Due to the high-$\sf SNR$ assumption, it follows that the optimal correlation coefficients for the inner and upper bounds are $\rho^*_{\sf in} \approx \rho_{\sf out}^{*2} \approx 1 - \frac{\sqrt{{\sf SNR}_1 + {\sf SNR}_2 + 2 \sqrt{{\sf SNR}_1 \cdot {\sf SNR}_2}}   }{{\sf SNR}_1 + {\sf SNR}_2 }$ respectively, resulting in the matching inner and upper bound as (\ref{eq:aymptoticsymcapacity}).
\end{proof}

{\bf Feedback Gain:}
From Theorems~\ref{thm:achievability} and $\ref{thm:outerbound}$, we can see that feedback can provide a significant capacity increase as was shown in the Gaussian interference channel~\cite{SuhTse}.
\begin{figure}[t]
\begin{center}
{\epsfig{figure=./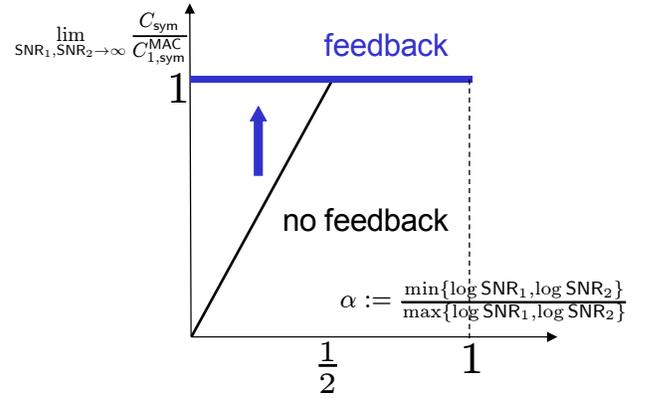, angle=0, width=0.45\textwidth}}
\end{center}
\caption{Feedback gain for a two-receiver symmetric channel setting as in Fig.~\ref{fig:gap}.
Note that feedback provides unbounded multiplicative gain when ${\sf SNR}_1$ is far apart from ${\sf SNR}_2$.
} \label{fig:feedbackgain}
\end{figure}
To see this clearly, let us consider the two-receiver symmetric channel setting as above. Fig.~\ref{fig:feedbackgain} plots the high-$\sf SNR$-regime symmetric capacity normalized by the MAC symmetric capacity for Rx 1, denoted by $C_{1, {\sf sym}}^{\sf {\sf MAC}} = \frac{1}{2} \log \left( 1+ {\sf SNR}_1 + {\sf SNR}_2 \right)$.  Here we use $\alpha:=\frac{ \min \{\log {\sf SNR}_{1}, \log {\sf SNR}_{2}\}}{ \max \{ \log {\sf SNR}_{1},\log {\sf SNR}_{2} \}}$ for $x$-axis to indicate a signal strength difference between ${\sf SNR}_{1}$ and ${\sf SNR}_{2}$.
Note that the symmetric nonfeedback capacity is simply the intersection of individual MAC capacities:
\begin{align*}
C_{\sf sym}^{\sf NO} = \min \left\{ \min_{i=1,2} \log ( 1+ {\sf SNR}_i ), \frac{1}{2} \log (1 + {\sf SNR}_1 + {\sf SNR}_2) \right\}.
\end{align*}
Note that the gap between $C_{\sf sym}^{\sf NO}$ and $C_{1,{\sf sym}}^{{\sf MAC}}$ can be arbitrarily large when ${\sf SNR}_1$ and ${\sf SNR}_2$ are far apart, i.e., $\alpha \leq \frac{1}{2}$. On the other hand, the symmetric feedback capacity is asymptotically the same as if there were only one receiver. As a result, feedback provides multiplicative gain for the regime of $\alpha \leq \frac{1}{2}$. In Section~\ref{sec:achievability}, we will provide an intuition behind this gain while describing an achievable scheme. $\Box$

\section{Gaussian Channel}

\subsection{Achievability: Proof of Theorem~\ref{thm:achievability}}
\label{sec:achievability}

{\bf Motivating Example (Fig.~\ref{fig:example}):}
To develop an achievable scheme for the Gaussian channel, we utilize the ADT deterministic model~\cite{Salman:IT11} illustrated in Fig.~\ref{fig:ADT} as an intermediate yet insightful model. The ADT multicast channel with $M$ transmitters and $K$ receivers is characterized by $MK$ values: $n_{mk}, \; 1 \leq m \leq M,  1 \leq k \leq K$ where $n_{mk}$ indicates the number of signal bit levels from transmitter $m$ to receiver $k$. These values correspond to the channel gains of the Gaussian channel in dB scale: $n_{mk} = \lfloor \log {\sf SNR}_{mk} \rfloor$.
See~\cite{Salman:IT11} for explicit details.

We first explain an achievable scheme for a particular ADT model example, illustrated in Fig.~\ref{fig:example}. Specifically, we show how to achieve a $(1.5,1.5)$ rate-pair with feedback. As will be seen in Theorem~\ref{theorem:ADT}, the feedback capacity region is given by $R_1 + R_2 \leq 3$. Extrapolating from this example, we later make observations leading to a generic achievable scheme.

\begin{figure}[t]
\begin{center}
{\epsfig{figure=./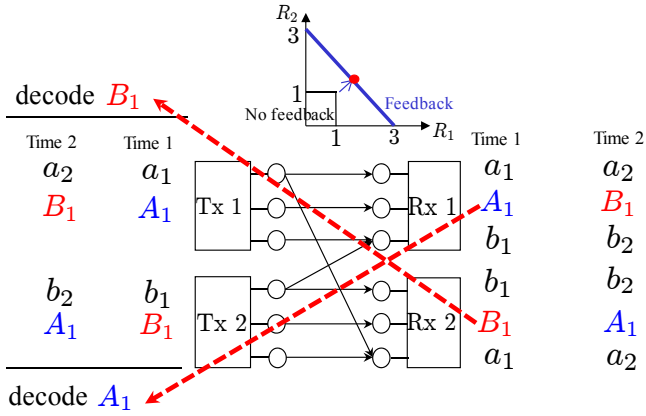, angle=0, width=0.47\textwidth}}
\end{center}
\caption{Motivating example: An achievable scheme for a $(1.5,1.5)$ rate-pair.} \label{fig:example}
\end{figure}

In the nonfeedback case, transmitter 1 can send only one bit $a_1$ through the top level, since the mincut between transmitter 1 and receiver 2 is limited by 1. Similarly transmitter 2 can send only one bit, say $b_1$. However, feedback provides more options to route by creating additional paths, e.g., $[Tx 1 \rightarrow Rx 1 \rightarrow \textrm{feedback} \rightarrow Tx 2 \rightarrow Rx 2]$. This additional path enables an increase over the nonfeedback rate. Transmitter 1 squeezes one more bit $A_1$ in the second level. Similarly transmitter 2 squeezes $B_1$ in its own second level. Receiver 1 then gets $A_1$, while receiver 2 does not. Similarly $B_1$ is received only at receiver 2.
We will show that these $A_1$ and $B_1$ can also be delivered to the other receivers with the help of feedback. At the beginning of time 2, transmitter 1 can decode $B_1$ with feedback. Similarly transmitter 2 can decode $A_1$. In time 2, transmitters 1 and 2 start with sending their own fresh information $a_2$ and $b_2$ on the top levels respectively. Now the idea is that transmitter 1 forwards the fed back $B_1$ using the second level. Note that this transmission allows receiver 1 to obtain $B_1$ without causing any harm to the transmission of $(a_2,b_2)$. Similarly transmitter 2 can deliver $A_1$ to receiver 2. Therefore, during the two time slots, transmitters 1 and 2 can deliver $(a_1,a_2,A_1)$ and $(b_1,b_2,B_1)$ respectively to both receivers, thus achieving $(1.5,1.5)$.

\begin{remark}
The gain comes from the fact that feedback creates alternative paths to provide \emph{routing} gain. In fact, this gain was already observed by~\cite{SuhTse} in the context of two-user strong interference channels where $n_{12} \geq n_{11}$ and $n_{21} \geq n_{22}$ in the ADT model. However in~\cite{SuhTse}, this routing gain does not appear in the weak interference regime such as $(n_{12}=1 < n_{11}=3, n_{21}=1 < n_{22}=3)$. On the other hand, in our multicast channel, we can see this routing gain even when cross links are weaker than direct links. $\Box$
\end{remark}

This example leads us to make two observations. First, feedback enables each transmitter to decode the other transmitter's information and then forwards this in the next time slot.
Second, the transmitted signals in time 2 can be correlated with the previously-sent information.
This motivates us to employ the decoding-and-forward and block Markov encoding schemes. In fact, an achievable scheme combining these two ideas was developed by Cover-Leung~\cite{Cover:it81} in the context of the two-user discrete memoryless MAC with feedback. In this paper, we generalize this scheme to the multiple-receiver case, thereby obtaining an approximate capacity region within a provably small gap. As for a decoding operation, we employ backward decoding~\cite{Willems:it85}.

Here is the outline of achievability. We employ block Markov encoding with a total size $B$ of blocks. In block 1, each transmitter sends its own information. In block 2, with feedback, each transmitter decodes the other user's information (sent in block 1). The two previously-sent messages are then available at each transmitter. Conditioning on these two messages, each transmitter generates its own fresh message and then sends a corresponding codeword. Each transmitter repeats this procedure until block $B-1$. In the last block $B$, to facilitate backward decoding, each transmitter sends a predetermined message. Each receiver waits until a total of $B$ blocks have been received and then performs backward decoding.

The achievable scheme outlined above is broadly applicable and not limited to the Gaussian channel. We characterize an achievable rate region for discrete mememoryless multicast channels in Lemma~\ref{lemma:achievablerateregion} and then choose an appropriate joint distribution to obtain the desired result. The generic coding scheme is also applicable to the ADT deterministic model and details will be presented in Section~\ref{sec:deterministic_channels}.

\begin{lemma}
\label{lemma:achievablerateregion}
The feedback capacity region of the two-transmitter $K$-receiver discrete memoryless multicast channel includes the set of $(R_1,R_2)$ such that
\begin{align}
R_1 &\leq I(X_1; Y_1,\cdots, Y_K | X_2, U) \\
R_2 &\leq I(X_2; Y_1,\cdots, Y_K | X_1, U) \\
R_1 + R_2 & \leq I(X_1,X_2; Y_k), \forall k
\end{align}
over all joint distributions $p(u) p(x_1|u) p(x_2|u)$. Here $U$ is a discrete random variable which takes on values in the set ${\cal U}$ where $|{\cal U}| \leq \min \left\{ |{\cal X}_1| |{\cal X}_2|, |{\cal Y}_1|, \cdots, |{\cal Y}_K| \right\} + 2$.
\end{lemma}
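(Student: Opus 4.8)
The plan is to formalize the block-Markov/decode-and-forward scheme sketched above and analyze it with backward decoding. Fix a joint distribution $p(u)p(x_1|u)p(x_2|u)$ and a number of blocks $B$. For each block $j$, I would generate $2^{NR_1}\cdot 2^{NR_2}$ cloud-center codewords $u^N(w_{1,j-1},w_{2,j-1})$ indexed by the pair of messages sent in the previous block, then for each cloud center generate satellite codewords $x_1^N(w_{1,j}\mid w_{1,j-1},w_{2,j-1})$ i.i.d.\ $\sim\prod p(x_{1,i}\mid u_i)$ and similarly $x_2^N(w_{2,j}\mid w_{1,j-1},w_{2,j-1})$. The encoding is: in block $j$, transmitter $1$ has (by the induction hypothesis) recovered $w_{2,j-1}$ from feedback, hence knows the pair $(w_{1,j-1},w_{2,j-1})$, forms $u^N$ of that pair, and sends the satellite $x_1^N(w_{1,j}\mid\cdot)$ for its fresh message; symmetrically for transmitter $2$. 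In block $B$ both transmitters send the designated default message so that backward decoding has an anchor.

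Next I would handle the two decoding operations separately. First, the \emph{at-transmitter} decoding of the partner's message: after block $j$, transmitter $1$ observes $Y_1^{i-1},\dots,Y_K^{i-1}$ (all $K$ feedback streams) and, knowing its own $w_{1,j}$ and the common $u^N$, does joint-typicality decoding for $w_{2,j}$; this succeeds w.h.p.\ provided $R_2 \le I(X_2;Y_1,\dots,Y_K\mid X_1,U)$. Here is where I use that all $K$ received signals are fed back, which is exactly the modelling point stressed in the footnote. Symmetrically transmitter $2$ recovers $w_{1,j}$ if $R_1 \le I(X_1;Y_1,\dots,Y_K\mid X_2,U)$. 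Then the \emph{at-receiver} backward decoding: receiver $k$ waits for all $B$ blocks, and going from block $B$ down to block $1$, having already decoded $(w_{1,j},w_{2,j})$ it knows $u^N(w_{1,j-1},w_{2,j-1})$ is \emph{not} yet known but the pair $(w_{1,j-1},w_{2,j-1})$ appears both as the cloud index of block $j$ and is re-sent (implicitly, via $U$) — so receiver $k$ looks for the unique pair $(w_{1,j-1},w_{2,j-1})$ such that $\big(u^N(w_{1,j-1},w_{2,j-1}),x_1^N(w_{1,j}\mid\cdot),x_2^N(w_{2,j}\mid\cdot),Y_{k,j}^N\big)$ is jointly typical. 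A standard counting of error events (wrong $u$ pair, wrong $x_1$ only, wrong $x_2$ only, all wrong) gives reliability provided $R_1+R_2 \le I(X_1,X_2;Y_k)$ for every $k$; the individual-rate terms $I(X_1;Y_k\mid X_2,U)$ etc.\ that would normally appear are absorbed because in block $j$ the receiver already knows $(w_{1,j},w_{2,j})$ from the future, so only the joint sum-rate constraint survives at the receiver. Taking $B\to\infty$ and then $N\to\infty$ makes the rate loss factor $\frac{B-1}{B}$ vanish.

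Finally, the cardinality bound $|\mathcal{U}| \le \min\{|\mathcal{X}_1||\mathcal{X}_2|,|\mathcal{Y}_1|,\dots,|\mathcal{Y}_K|\}+2$ I would obtain by the usual Carath\'eodory / Fenchel--Eggleston argument: the region is a union over $p(u)p(x_1|u)p(x_2|u)$ of polytopes, and for fixed conditional laws $p(x_1|u),p(x_2|u)$ the relevant functionals of $p(u)$ that must be preserved are the $K$ mutual informations $I(X_1,X_2;Y_k)$ plus the two conditional mutual informations plus preserving the joint law $p(x_1,x_2)$ (which costs $|\mathcal{X}_1||\mathcal{X}_2|-1$ constraints), and one can replace one of these bundles by whichever is smaller, yielding the stated minimum plus a small additive constant.

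The main obstacle I anticipate is getting the backward-decoding error analysis exactly right so that \emph{only} the sum-rate constraint $R_1+R_2\le I(X_1,X_2;Y_k)$ emerges at the receivers while the genuinely binding single-rate constraints come \emph{only} from the feedback-aided decoding at the transmitters; this requires carefully tracking which messages are already known at each stage (future blocks at the receiver, own message at the transmitter) and verifying that the codebook structure $p(u)p(x_1|u)p(x_2|u)$ — in particular the conditional independence of $X_1$ and $X_2$ given $U$ — is what makes the cross terms collapse. A secondary technical point is ensuring the default-message last block and the induction base (block $1$, where no feedback is yet available and each transmitter simply sends its message under the trivial $U$) are consistent with the asymptotic rate claim.
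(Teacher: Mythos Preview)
Your proposal is correct and follows essentially the same approach as the paper: block-Markov superposition coding with $p(u)p(x_1|u)p(x_2|u)$, feedback-aided decode-and-forward at each transmitter yielding the two single-rate constraints $R_m\le I(X_m;Y_1,\dots,Y_K\mid X_{3-m},U)$, and backward decoding at each receiver in which the satellite indices $(w_{1,j},w_{2,j})$ are already known so that only the cloud pair $(w_{1,j-1},w_{2,j-1})$ must be recovered, leaving just the sum-rate constraint $R_1+R_2\le I(X_1,X_2;Y_k)$. The paper's error analysis is exactly the three-event union bound you describe (it omits the Carath\'eodory argument for the cardinality bound, which you supply).
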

\begin{proof}
See Appendix~\ref{appendix:proofoflemma}.
\end{proof}
We now choose the following Gaussian input distribution to complete the proof: $\forall m=1,2$,
\begin{align}
U \sim {\cal CN} (0,\rho); \tilde{X}_m \sim {\cal CN}(0,1-\rho),
\end{align}
where $X_m = U+\tilde{X}_m$ and $(U, \tilde{X}_1,\tilde{X}_2)$ are independent. Straightforward computation then gives (\ref{eq:achievablerate1})-(\ref{eq:achievablerate12}). This completes the proof.

\subsection{Outer Bound: Proof of Theorem~\ref{thm:outerbound}}
\label{sec:outerbound}

By symmetry, it suffices to prove the bounds of (\ref{eq:outerbound1}) and (\ref{eq:outerbound12}). These bounds are based on standard cut-set arguments. 

Assume that the covariance between $X_1$ and $X_2$ is $E [X_1 X_2^*]=\rho$. Starting with Fano's inequality,
\begin{align*}
\begin{split}
&N(R_1 - \epsilon_N) \leq I(W_1;Y_1^{N}, \cdots, Y_K^N, W_2) \\
& \overset{(a)}{=} \sum h(Y_{1i}, \cdots, {Y}_{Ki} | W_2, Y_1^{i-1}, \cdots, {Y}_K^{i-1}, X_{2i}) \\
& \qquad - h(Y_{1i}, \cdots, {Y}_{Ki} | W_1, W_2, Y_1^{i-1}, \cdots, {Y}_K^{i-1}, X_{2i}, X_{1i}) \\
& \overset{(b)}{\leq} \sum [h(Y_{1i},\cdots,Y_{Ki} | X_{2i}) - h(Z_{1i},\cdots,Z_{Ki})]  \\
& \overset{(c)}{\leq} N \log \left( 1 + (1-|\rho|^2) \sum_{k=1}^{K} {\sf SNR}_{1k} \right )
\end{split}
\end{align*}
where $(a)$ follows from the fact that $W_1$ is independent of $W_2$, and $X_{mi}$ is a function of $(W_m,{Y}_1^{i-1},\cdots,Y_K^{i-1})$; $(b)$ follows from the fact that conditioning reduces entropy and channel is memoryless; and $(c)$ follows from the fact that $|K_{Y_1,\cdots,Y_K |X_2}| \leq  1+  (1 - |\rho|^2)\sum_{k} {\sf SNR}_{1k}$. If $R_1$ is achievable, then $\epsilon_N \rightarrow 0$ as $N$ tends to infinity. Therefore, we get the desired bound.

For the sum-rate outer bound,
\begin{align*}
\begin{split}
&N(R_1 + R_2- \epsilon_N) \leq I(W_1, W_2;Y_1^{N}) \\
& \overset{(a)}{\leq} \sum [h(Y_{1i}) - h(Y_{1i}| W_1, W_2, Y_1^{i-1}, \cdots, Y_K^{i-1}, X_{1i}, X_{2i})]  \\
& \overset{(b)}{=} \sum [h(Y_{1i}) - h(Z_{1i})]  \\
& \overset{(c)}{\leq}  N \log \left( 1 + {\sf SNR}_{11} + {\sf SNR}_{21} + 2 |\rho | \sqrt{{\sf SNR}_{11} \cdot {\sf SNR}_{21}} \right )
\end{split}
\end{align*}
where $(a)$ follows from the fact that conditioning reduces entropy; $(b)$ follows from the memoryless property of channels; and $(c)$ follows from the fact that $|K_{Y_1}| \leq 1 + {\sf SNR}_{11} + {\sf SNR}_{21} + 2 |\rho | \sqrt{{\sf SNR}_{11} \cdot {\sf SNR}_{21}}$.

\subsection{Generalization to $M$-transmitter Case}
\label{sec:generalization}

\begin{theorem}[Inner Bound]
\label{thm:achievability_Mtx}
The feedback capacity region of the $M$-transmitter $K$-receiver Gaussian multicast channel includes the set ${\cal R}_M$ of $(R_1,\cdots, R_M)$ such that for $0 \leq \rho \leq 1$, $\forall {\cal S}  \subsetneq \{1,\cdots, M \}$ and $\forall k$,
\begin{align}
\label{eq:achievablerate1_Mtx}
\sum_{m \in {\cal S}} R_m &\leq \log \left|
I_{K} + (1-\rho) G_{\cal S} G_{\cal S}^* \right | \\
\label{eq:achievablerate12_Mtx}
\sum_{m=1}^{M} R_m & \leq \log \left (1 + \sum_{m=1}^{M} {\sf SNR}_{mk}  + \sum_{m \neq n}\rho \sqrt{
{\sf SNR}_{mk} \cdot {\sf SNR}_{nk} } \right )
\end{align}
where $G_{\cal S}$ is such that
\begin{align*}
Y = G_{\cal S} X_{\cal S} + G_{{\cal S}^C} X_{{\cal S}^C} +   Z.
\end{align*}
Here $Y:=[Y_{1},\cdots,Y_K]^t \in \mathbb{C}^{K} $; $X_{\cal S}:=[X_m]^t \in \mathbb{C}^{|\cal S|}, m \in {\cal S}$; and $Z:=[Z_{1},\cdots,Z_K]^t$.
\end{theorem}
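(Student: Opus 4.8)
The plan is to generalize the two-transmitter achievability argument in two stages, exactly mirroring the structure of Section~\ref{sec:achievability}. First I would state and prove an $M$-transmitter analogue of Lemma~\ref{lemma:achievablerateregion}: the feedback capacity region of the $M$-transmitter $K$-receiver discrete memoryless multicast channel includes all $(R_1,\ldots,R_M)$ satisfying $\sum_{m\in{\cal S}} R_m \leq I(X_{\cal S}; Y_1,\ldots,Y_K \mid X_{{\cal S}^C}, U)$ for every ${\cal S}\subsetneq\{1,\ldots,M\}$, together with $\sum_{m=1}^M R_m \leq I(X_1,\ldots,X_M; Y_k)$ for all $k$, over all joint distributions of the form $p(u)\prod_{m=1}^M p(x_m\mid u)$. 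The coding scheme is the obvious lift of Cover--Leung with block Markov encoding: in each block every transmitter conditions its fresh codeword on the common auxiliary $U$ (whose codeword is a deterministic function of the messages decoded via feedback from the previous block), and each receiver performs backward decoding. The feedback links let every transmitter recover all other transmitters' block-$(b-1)$ messages, so the cooperation variable $U$ can indeed be generated with the product-form conditional law above; the error analysis is the standard one for backward decoding with a joint-typicality decoder, yielding the stated region. I would relegate this to an appendix paralleling Appendix~\ref{appendix:proofoflemma}.

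Second, I would specialize to the Gaussian channel by choosing $U\sim{\cal CN}(0,\rho)$ and $\tilde X_m\sim{\cal CN}(0,1-\rho)$ independent, with $X_m = U + \tilde X_m$, so that $\mathrm{Cov}(X_{\cal S}) = (1-\rho)I_{|{\cal S}|} + \rho \mathbf{1}\mathbf{1}^t$ and $\mathbb{E}[X_m X_n^*]=\rho$ for $m\neq n$. Then $I(X_{\cal S}; Y \mid X_{{\cal S}^C}, U)$ is computed by noting that conditioned on $(X_{{\cal S}^C},U)$ the vectors $\{\tilde X_m\}_{m\in{\cal S}}$ are i.i.d.\ ${\cal CN}(0,1-\rho)$, so the conditional covariance of $Y$ given $(X_{{\cal S}^C},U)$ minus its mean is $I_K + (1-\rho)G_{\cal S}G_{\cal S}^*$, and subtracting the noise entropy $\log|I_K|$ gives exactly \eqref{eq:achievablerate1_Mtx}. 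For the sum-rate bound one computes $h(Y_k) - h(Z_k)$ with $Y_k = \sum_m g_{mk}X_m + Z_k$: since $\mathrm{Var}(Y_k) = 1 + \sum_m {\sf SNR}_{mk} + \sum_{m\neq n}\rho\sqrt{{\sf SNR}_{mk}{\sf SNR}_{nk}}\,$ (using $|\mathbb{E}[X_mX_n^*]|=\rho$ and absorbing the phases of the $g_{mk}$), this yields \eqref{eq:achievablerate12_Mtx}. These are routine Gaussian entropy computations once the covariance structure is written down.

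The main obstacle is the discrete-memoryless achievability lemma for general $M$, specifically verifying that no cross-terms between different subsets ${\cal S}$ are needed and that the product-form input distribution $p(u)\prod_m p(x_m\mid u)$ suffices. Concretely, one must check that in the backward-decoding error analysis, for each block and each candidate error event indexed by the subset ${\cal S}$ of transmitters whose messages are decoded incorrectly, the dominant error exponent is governed precisely by $I(X_{\cal S};Y_1,\ldots,Y_K\mid X_{{\cal S}^C},U)$ — and that the "resolvability"/binning-free structure of Cover--Leung carries over, i.e.\ the fresh messages need only be conditioned on $U$ and not on a richer history. I expect this to go through because each transmitter, having decoded all others' previous messages from feedback, knows $U$ exactly, so the analysis reduces block-by-block to a MAC-type argument with common randomness $U$; nonetheless, writing the backward-decoding chain of typicality events cleanly for all $2^M-1$ nonempty subsets ${\cal S}$, and confirming the cardinality bound on ${\cal U}$, is the part that requires genuine care rather than mechanical computation.
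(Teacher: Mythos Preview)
Your proposal is correct and follows the same two-stage structure as the paper: first establish the $M$-transmitter discrete-memoryless lemma (the paper's Lemma~\ref{lemma:generalachievable}, with exactly the bounds and input law $p(u)\prod_m p(x_m\mid u)$ you state), then specialize via the identical Gaussian choice $U\sim{\cal CN}(0,\rho)$, $X_m=U+\tilde X_m$ with $\tilde X_m\sim{\cal CN}(0,1-\rho)$ independent.

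One point to straighten out when you write the details: the proper-subset constraints $\sum_{m\in{\cal S}} R_m \leq I(X_{\cal S};Y_1,\ldots,Y_K\mid X_{{\cal S}^C},U)$ arise from the \emph{feedback decoding step at the transmitters}, not from backward decoding at the receivers. In the paper's two-user analysis (Appendix~\ref{appendix:proofoflemma}) the bound \eqref{eq-R2c-constraint} comes from transmitter~1 decoding $w_2^{(b-1)}$ from the fed-back vector $(y_1^{N,(b-1)},\ldots,y_K^{N,(b-1)})$ while knowing $(X_1,U)$; this is why all $K$ received signals appear jointly. Backward decoding at receiver $k$ sees only $Y_k$ and, because any error in $(j,l)$ makes the auxiliary codeword $u^N(j,l)$ independent of the true one, yields only the full sum-rate constraint $\sum_m R_m \leq I(X_1,\ldots,X_M;Y_k)$. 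For general $M$, each transmitter $m\in{\cal S}^c$ decoding the block-$(b-1)$ messages of the other $M-1$ users (a MAC-type problem with side information $X_m,U$ and observations $Y_1,\ldots,Y_K$) is what generates the full family of subset bounds for all ${\cal S}\subsetneq\{1,\ldots,M\}$. With that attribution fixed, your error-analysis sketch goes through as you expect.
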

\begin{proof}
We first generalize Lemma~\ref{lemma:achievablerateregion} as follows.
\begin{lemma}
\label{lemma:generalachievable}
The feedback capacity region of the $M$-transmitter $K$-receiver discrete memoryless multicast channel includes the set of $(R_1,\cdots,R_M)$ such that $\forall {\cal S}  \subsetneq \{1,\cdots, M \}$ and $\forall k$,
\begin{align*}
\sum_{m \in {\cal S}} R_m &\leq I(X_{\cal S}; Y_1,\cdots, Y_K | X_{{\cal S}^c}, U) \\
\sum_{m=1}^{M} R_m & \leq I (X_1,\cdots, X_{M}; Y_k )
\end{align*}
over all joint distributions $p(u) \prod_{m=1}^{M} p(x_m|u)$. Here $U$ is a discrete random variable which takes on values in the set ${\cal U}$ where $|{\cal U}| \leq \min \left\{ |{\cal X}_1| |{\cal X}_2| \cdots |{\cal X}_M|, |{\cal Y}_1|, \cdots, |{\cal Y}_K| \right\} + 2$.
\end{lemma}
\begin{proof}
For $M>2$, a multitude of auxiliary random variables can be incorporated to capture correlation between many transmitter pairs. For simplicity, however, we consider a natural extension of the two-transmitter case which includes only one auxiliary random variable.
The only distinction is that with feedback, each transmitter decodes all of the messages of the other transmitters, and generates its new message and a corresponding codeword, conditioned on all of these decoded messages. This induces a multitude of constraints on the rate region. To avoid significant overlaps, we omit the detailed proof.
\end{proof}
We now choose the following Gaussian input distribution to complete the proof: $\forall m=1,\cdots,M$,
\begin{align}
U \sim {\cal CN} (0,\rho); \tilde{X}_m \sim {\cal CN}(0,1-\rho),
\end{align}
where $X_m = U+\tilde{X}_m$ and $(U, \tilde{X}_1, \cdots, \tilde{X}_M)$ are independent. Straightforward computation then gives (\ref{eq:achievablerate1_Mtx})-(\ref{eq:achievablerate12_Mtx}). This completes the proof.
\end{proof}

\begin{theorem}[Outer Bound]
\label{thm:outerbound_Mtx}
The feedback capacity region of the $M$-transmitter $K$-receiver Gaussian multicast channel is included by the set ${\cal{\bar{C}}}_{M}$ of $(R_1,\cdots, R_M)$ such that $\forall K_{X}:=E[XX^*] \succeq {\bf 0}$, $\forall {\cal S}  \subsetneq \{1,\cdots, M \}$ and $\forall k$,
\begin{align}
\label{eq:outerbound1_Mtx}
\sum_{m \in {\cal S}} R_m &\leq \log \left|
I_{K} + G_{\cal S} K_{{\cal S} | {\cal S}^c } G_{\cal S}^* \right | \\
\label{eq:outerbound12_Mtx}
\sum_{m=1}^{M} R_m & \leq \log \left (1 + \sum_{m=1}^{M} {\sf SNR}_{mk}  + \sum_{m \neq n}\rho_{mn} \sqrt{
{\sf SNR}_{mk} \cdot {\sf SNR}_{nk} } \right )
\end{align}
where $K_{{\cal S} | {\cal S}^c }$ denotes the conditional covariance matrix of $X_{\cal S}$ given $X_{ {\cal S}^c }$ and $\rho_{mn}:= \left| [K_{X}]_{mn} \right|$.
\end{theorem}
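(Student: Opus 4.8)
The plan is to mimic the two-transmitter outer bound from Section~\ref{sec:outerbound}, upgrading each step from scalars to matrices and from a single $\rho$ to the full covariance matrix $K_X$. Fix a code achieving $(R_1,\ldots,R_M)$ and let $K_X = E[X X^*]$ be the (asymptotic) input covariance, with $[K_X]_{mn}$ its entries and $\rho_{mn} := |[K_X]_{mn}|$; note $K_X \succeq \mathbf{0}$ and the diagonal entries are at most the power constraint $1$, so $\rho_{mn}\le 1$. Two families of bounds must be produced: the subset-rate bounds \eqref{eq:outerbound1_Mtx} for each $\mathcal{S}\subsetneq\{1,\ldots,M\}$, and the total sum-rate bound \eqref{eq:outerbound12_Mtx} for each receiver $k$.

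For the subset bound, I would start from Fano: $N(\sum_{m\in\mathcal S}R_m - \epsilon_N) \le I(W_{\mathcal S}; Y_1^N,\ldots,Y_K^N, W_{\mathcal S^c})$. Expanding as a telescoping sum of conditional entropy differences and using that $W_{\mathcal S}\perp W_{\mathcal S^c}$ together with the fact that $X_{mi}$ is a deterministic function of $(W_m, Y_1^{i-1},\ldots,Y_K^{i-1})$, the conditioning lets me introduce $X_{{\mathcal S}^c,i}$ into the first term and both $X_{{\mathcal S},i}$ and $X_{{\mathcal S}^c,i}$ into the second, so the second term collapses to $h(Z_{1i},\ldots,Z_{Ki})$ by the memoryless property. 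Dropping conditioning in the first term (conditioning reduces entropy) leaves $\sum_i h(Y_{1i},\ldots,Y_{Ki}\mid X_{{\mathcal S}^c,i})$. Given $X_{{\mathcal S}^c,i}$, the vector $Y_i = G_{\mathcal S}X_{{\mathcal S},i} + G_{{\mathcal S}^c}X_{{\mathcal S}^c,i} + Z_i$ has conditional covariance at most $G_{\mathcal S} K_{{\mathcal S}\mid{\mathcal S}^c} G_{\mathcal S}^* + I_K$, where $K_{{\mathcal S}\mid{\mathcal S}^c}$ is the Schur complement (conditional covariance of $X_{\mathcal S}$ given $X_{{\mathcal S}^c}$) derived from $K_X$; the Gaussian maximizes differential entropy for fixed covariance, so a Jensen/concavity argument over the per-symbol covariances yields $\sum_i h(Y_i\mid X_{{\mathcal S}^c,i}) \le N\log|I_K + G_{\mathcal S}K_{{\mathcal S}\mid{\mathcal S}^c}G_{\mathcal S}^*| + N\log(\pi e)^K$, which after subtracting $h(Z_i)=\log(\pi e)^K$ per symbol gives \eqref{eq:outerbound1_Mtx}.

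For the sum-rate bound I would fix a receiver $k$ and bound $N(\sum_{m=1}^M R_m - \epsilon_N) \le I(W_1,\ldots,W_M; Y_k^N)$, then follow exactly the $(b)$–$(c)$ chain of the two-transmitter proof: conditioning reduces entropy to insert all $X_{mi}$, the residual entropy is $h(Z_{ki})$, and $h(Y_{ki})$ is maximized by a Gaussian whose variance is $1 + \sum_m {\sf SNR}_{mk} + \sum_{m\neq n}[K_X]_{mn}^{*}g_{mk}^{*}g_{nk}$; bounding the cross terms by their magnitudes $\rho_{mn}\sqrt{{\sf SNR}_{mk}{\sf SNR}_{nk}}$ gives \eqref{eq:outerbound12_Mtx}. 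Letting $N\to\infty$ so $\epsilon_N\to0$ completes the argument, with the $K_X$ in the theorem statement being the limit point of the input covariances (a standard compactness argument over the bounded set of feasible covariances).

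The main obstacle is the Schur-complement/Jensen step in the subset bound: one must argue carefully that, with feedback, the average per-symbol conditional covariance of $X_{\mathcal S}$ given $X_{{\mathcal S}^c}$ is dominated (in the positive-semidefinite order) by $K_{{\mathcal S}\mid{\mathcal S}^c}$ formed from the time-averaged $K_X$, and then invoke concavity of $\log\det(I_K + G_{\mathcal S}(\cdot)G_{\mathcal S}^*)$ on the PSD cone. The feedback makes $X_i$ depend on past outputs, so the per-symbol covariances are not i.i.d.; the resolution is that only the \emph{average} covariance matters for the entropy bound, and $\log\det$ concavity handles the averaging — this is the one place where a genuine argument (rather than a routine copy of the $M=2$ computation) is needed. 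Everything else is bookkeeping generalizing scalars to matrices.
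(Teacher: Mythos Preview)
Your proposal is correct and follows essentially the same approach as the paper: the paper's own proof of Theorem~\ref{thm:outerbound_Mtx} simply states ``As before, the proof of the outer bounds are based on the standard cutset argument. Hence, we omit the detailed proofs,'' i.e., it defers entirely to the $M=2$ computation in Section~\ref{sec:outerbound} that you are generalizing. Your write-up in fact supplies more detail than the paper does, including the $\log\det$-concavity/Jensen step needed to pass from per-symbol conditional covariances to the time-averaged $K_{\mathcal S\mid\mathcal S^c}$, which the paper leaves implicit.
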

\begin{proof}
As before, the proof of the outer bounds are based on the standard cutset argument. Hence, we omit the detailed proofs.
\end{proof}
\begin{corollary}[Constant Gap]
\label{cor:constantgap}
The gap between the inner bound and outer bound regions given in Theorems~\ref{thm:achievability_Mtx} and $\ref{thm:outerbound_Mtx}$ is upper-bounded by $\Delta:= \log \left\{ 2(M-1) \right\}$ bits/s/Hz/transmitter:
\begin{align*}
{\cal R}_M \subseteq {\cal C}_M \subseteq {\cal R}_M \oplus \left( [0, \Delta] \times \cdots \times [0,\Delta] \right).
\end{align*}
\end{corollary}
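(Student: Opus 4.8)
The plan is to follow the two-transmitter argument behind Corollary~\ref{cor:onebitgap}, now confronting the asymmetry that the outer region \eqref{eq:outerbound1_Mtx}--\eqref{eq:outerbound12_Mtx} is governed by a full covariance matrix $K_X$ whereas the inner region \eqref{eq:achievablerate1_Mtx}--\eqref{eq:achievablerate12_Mtx} is governed by a single scalar $\rho$. Fix an admissible $K_X\succeq\mathbf{0}$ (diagonal entries at most the unit transmit powers) and any $\mathbf{R}=(R_1,\dots,R_M)$ satisfying \eqref{eq:outerbound1_Mtx}--\eqref{eq:outerbound12_Mtx} for this $K_X$. It suffices to show that $\mathbf{R}':=(\mathbf{R}-\Delta\mathbf{1})^+$ belongs to the inner region ${\cal R}_M(\rho^\star)$ corresponding to the \emph{single} choice $\rho^\star=\tfrac{1}{2}$; the inclusion ${\cal C}_M\subseteq{\cal R}_M\oplus([0,\Delta]\times\cdots\times[0,\Delta])$ then follows by taking unions over $K_X$ and $\rho$. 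In the $M=2$ case equating the inner and outer $\rho$ makes the sum-rate bounds coincide, but here no scalar can reproduce an arbitrary $K_X$, so $\rho^\star$ must work against \emph{every} $K_X$ at once; $\tfrac{1}{2}$ is the largest value that still keeps the subset bounds under control, which is exactly what yields the constant $\Delta=\log\{2(M-1)\}$.

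Two observations make $\rho^\star=\tfrac{1}{2}$ a universal choice. First, for every proper ${\cal S}$ one has $K_{{\cal S}|{\cal S}^c}\preceq K_{\cal S}$ (the Schur complement is dominated by the principal block), and $K_{\cal S}$, being a principal submatrix of $K_X$ with diagonal entries $\le1$, satisfies $\lambda_{\max}(K_{\cal S})\le\operatorname{tr}(K_{\cal S})\le|{\cal S}|\le M-1$, so $K_{{\cal S}|{\cal S}^c}\preceq(M-1)I$ regardless of $K_X$. Second, every off-diagonal magnitude satisfies $\rho_{mn}=|[K_X]_{mn}|\le1$, again regardless of $K_X$.

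With these in hand I would compare the two regions constraint by constraint. For a proper ${\cal S}$: the first observation gives $\log|I_K+G_{\cal S}K_{{\cal S}|{\cal S}^c}G_{\cal S}^*|\le\log|I_K+(M-1)G_{\cal S}G_{\cal S}^*|=\log|I_K+2(M-1)B|$ with $B:=\tfrac{1}{2}G_{\cal S}G_{\cal S}^*$, where $\log|I_K+B|$ is precisely the inner ${\cal S}$-bound at $\rho^\star$; then the elementary estimate $\log|I+cB|-\log|I+B|=\sum_i\log\tfrac{1+c\mu_i}{1+\mu_i}\le(\operatorname{rank}B)\log c$ for $c\ge1$, applied with $c=2(M-1)$ and $\operatorname{rank}B\le\min\{K,|{\cal S}|\}\le|{\cal S}|$, shows the outer ${\cal S}$-bound exceeds the inner one by at most $|{\cal S}|\log\{2(M-1)\}=|{\cal S}|\Delta$, so subtracting $\Delta$ from each of the $|{\cal S}|$ coordinates restores the inner inequality. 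For the sum rate: the second observation together with the elementary bound $\tfrac{1+A+C}{1+A+C/2}\le2$ (valid since $1+A\ge0$), with $A:=\sum_m{\sf SNR}_{mk}$ and $C:=\sum_{m\ne n}\sqrt{{\sf SNR}_{mk}\,{\sf SNR}_{nk}}$, shows the outer sum-bound exceeds the inner sum-bound at $\rho^\star$ by at most $\log2\le\Delta$, which the subtracted $\Delta$'s absorb. The flooring $(\cdot)^+$ is handled routinely: for each ${\cal S}$ one passes to the still-proper set of un-truncated coordinates and uses that the inner subset bounds are nondecreasing in the transmitter set, while for the full sum bound, if any coordinate is truncated the outer sum-rate bound already exceeds $\sum_m R_m'$ by at least $\Delta\ge\log2$, and otherwise the subtracted $M\Delta$ suffices.

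The crux, and the reason the constant is $\log\{2(M-1)\}$ rather than $1$, is precisely this scalar-versus-matrix reconciliation: raising $\rho$ loosens the sum-rate inner bound but tightens all of the subset inner bounds, so one scalar must serve both families against any $K_X$. The bounds $K_{{\cal S}|{\cal S}^c}\preceq(M-1)I$ and $\rho_{mn}\le1$ are what decouple the choice of $\rho$ from the true $K_X$, and $\rho^\star=\tfrac{1}{2}$ is the value balancing the resulting subset loss $\log\tfrac{M-1}{1-\rho^\star}$ against $\Delta$ and the sum loss $\log\tfrac{1}{\rho^\star}$ against $1$. A subsidiary but essential ingredient is the rank bound $\log|I+cB|-\log|I+B|\le(\operatorname{rank}B)\log c$, which keeps the per-transmitter penalty at $\Delta$ rather than letting it scale with the number of receivers $K$.
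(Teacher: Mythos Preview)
Your argument is correct and arrives at the same constant $\Delta=\log\{2(M-1)\}$, but the route is genuinely different from the paper's. The paper does \emph{not} fix a universal $\rho^\star$; instead, given the outer-bound covariance $K_X$, it picks the data-dependent value $\rho=\min_{m,n}\rho_{mn}$ and then proves the conditional-covariance bound $K_{{\cal S}|{\cal S}^c}\preceq |{\cal S}|(1-\rho^2)I_{|{\cal S}|}$ via a trace argument that passes through the scalar conditional variances $K_{X_m|X_n}=1-\rho_{mn}^2$. With that matched $\rho$, the subset gap becomes $\log\{(1+\rho)|{\cal S}|\}\le\log\{2(M-1)\}$ per coordinate, and the sum-rate gap is handled separately by Cauchy--Schwarz, $(\sum_m\sqrt{{\sf SNR}_{mk}})^2\le M\sum_m{\sf SNR}_{mk}$, yielding $\delta_{\sf sum}\le\log M$. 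Your approach replaces all of this by the universal choice $\rho^\star=\tfrac12$ together with the cruder but cleaner Schur-complement estimate $K_{{\cal S}|{\cal S}^c}\preceq K_{\cal S}\preceq |{\cal S}|\,I\preceq(M-1)I$; the subset gap then reduces to the eigenvalue inequality $\log|I+2(M-1)B|-\log|I+B|\le(\operatorname{rank}B)\log\{2(M-1)\}$, and the sum-rate gap collapses to the elementary $\log\tfrac{1+A+C}{1+A+C/2}\le\log 2$. What you gain is simplicity and a sharper sum-rate gap ($\log 2$ instead of $\log M$); what the paper's data-dependent $\rho$ buys is a tighter coupling to $K_X$ that, for $M=2$, recovers the exact matching of the sum-rate bounds (their $\delta_{\sf sum}=0$), though this refinement is invisible in the final $\Delta$ since the subset term dominates for all $M\ge2$.
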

\begin{proof}
See Appendix~\ref{appendix:constantgap}.
\end{proof}

\section{Deterministic Channel}
\label{sec:deterministic_channels}

The ADT model was developed as a method of analysis to approximate the feedback capacity region of the Gaussian multicast channel. In this section, we find the exact feedback capacity region of the deterministic channel.
\begin{figure}[t]
\begin{center}
{\epsfig{figure=./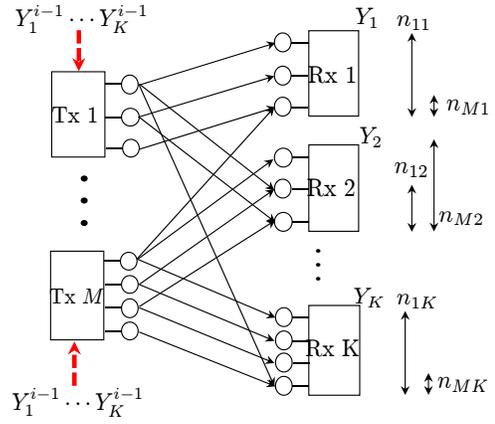, angle=0, width=0.35\textwidth}}
\end{center}
\caption{An ADT multicast channel with $M$ transmitters and $K$ receivers providing noiseless feedback.} \label{fig:ADT}
\end{figure}

\begin{theorem}
\label{theorem:ADT}
The feedback capacity region of the $M$-transmitter $K$-receiver ADT multicast channel is the set of $(R_1,\cdots, R_M)$ such that $\forall {\cal S}  \subsetneq \{1,\cdots, M \}$ and $\forall k$,
\begin{align}
\label{eq:ADTbound1}
\sum_{m \in {\cal S}} R_m &\leq {\sf rank} (G_{\cal S}) \\
\label{eq:ADTbound2}
\sum_{m=1}^{M} R_m & \leq \max \left\{ n_{1k}, \cdots, n_{Mk} \right\},
\end{align}
where $G_{\cal S}$ is such that $Y = G_{\cal S} X_{\cal S} + G_{{\cal S}^c} X_{{\cal S}^c}$.
\end{theorem}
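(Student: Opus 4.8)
The plan is to establish the converse and achievability separately, exploiting the linear deterministic structure of the ADT model. For the converse, I would follow the standard cut-set style argument but carried out exactly rather than approximately, since the channel is noiseless. For the bound $\sum_{m \in {\cal S}} R_m \le {\sf rank}(G_{\cal S})$, I would give messages $\{W_n\}_{n \notin {\cal S}}$ to a genie and, starting from Fano's inequality, write $N(\sum_{m\in{\cal S}} R_m - \epsilon_N) \le I(W_{\cal S}; Y_1^N,\dots,Y_K^N \mid W_{{\cal S}^c})$. Expanding this with the chain rule and using that $X_{{\cal S}^c,i}$ is a deterministic function of $(W_{{\cal S}^c}, Y_1^{i-1},\dots,Y_K^{i-1})$, the terms involving $X_{{\cal S}^c}$ can be stripped off, leaving a sum of entropies of $G_{\cal S} X_{{\cal S},i}$, each of which is at most ${\sf rank}(G_{\cal S})$ since entropies of deterministic (binary-vector) quantities are bounded by the dimension of their support, and the support of $G_{\cal S}X_{\cal S}$ lies in a subspace of dimension ${\sf rank}(G_{\cal S})$. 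For the sum-rate bound at receiver $k$, I would similarly bound $N(\sum_m R_m - \epsilon_N) \le I(W_1,\dots,W_M; Y_k^N) \le \sum_i H(Y_{ki}) \le N \max_m n_{mk}$, since $Y_{ki}$ is a binary vector of length $\max_m n_{mk}$.

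For achievability, I would specialize the block-Markov / decode-and-forward scheme already described for the Gaussian case (Lemma~\ref{lemma:generalachievable}) to the deterministic channel, but it is cleaner here to argue directly. The idea mirrors the motivating example of Fig.~\ref{fig:example}: the ${\sf rank}(G_{\cal S})$ constraints are exactly the multiple-access constraints one would get from the MAC formed by transmitters in ${\cal S}$ to the aggregate receiver $(Y_1,\dots,Y_K)$, so they are achievable by routing fresh bits on appropriately chosen signal levels; the extra sum-rate constraint $\max_m n_{mk}$ for each $k$ is the single-receiver bottleneck, and feedback is what lets the transmitters cooperate to push bits that a given receiver missed back out to it on a subsequent block. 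Concretely, I would show that any rate vector in the stated region can be realized by: (i) in each block, each transmitter sends a combination of fresh bits plus previously-fed-back bits of the others on disjoint level-subsets, (ii) using feedback each transmitter reconstructs all messages of the previous block, and (iii) backward decoding at each receiver recovers everything, with the rate accounting matching exactly the polymatroid defined by~(\ref{eq:ADTbound1})--(\ref{eq:ADTbound2}). To make this rigorous I would verify that the region is a (intersection of) polymatroid(s) and that its vertices are achievable by a level-assignment argument, then invoke time-sharing.

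The main obstacle I anticipate is the achievability bookkeeping: unlike the two-transmitter example, for general $M$ and $K$ one must simultaneously satisfy the exponentially many subset constraints ${\sf rank}(G_{\cal S})$ together with the $K$ receiver constraints, and show a single level-allocation (across a block-Markov code with $B$ blocks) meets all of them with vanishing rate loss $O(1/B)$. I expect the cleanest route is to prove that the right-hand sides of~(\ref{eq:ADTbound1}) define a matroid rank function on the ground set of transmitters (submodularity of ${\cal S} \mapsto {\sf rank}(G_{\cal S})$ is standard), so the region without the receiver constraints is a polymatroid whose integer points are achievable by a greedy/nested level assignment; then intersecting with the $K$ half-spaces $\sum_m R_m \le \max_m n_{mk}$ and showing achievability of the resulting vertices. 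The converse, by contrast, should be essentially routine given the exact (noiseless) nature of the model. I would also need to confirm the last-block overhead and the correctness of backward decoding, which is standard~\cite{Willems:it85} but should be stated.
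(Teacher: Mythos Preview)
Your converse is essentially the paper's own argument: Fano plus genie $W_{{\cal S}^c}$, chain rule, use that $X_{{\cal S}^c,i}$ is a function of $(W_{{\cal S}^c},Y^{i-1})$, then bound $H(Y_i\mid X_{{\cal S}^c,i})\le {\sf rank}(G_{\cal S})$; and for the sum rate, $\sum_i H(Y_{ki})\le N\max_m n_{mk}$. Nothing to change there.

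For achievability you and the paper diverge. The paper does \emph{not} build a separate level-allocation or polymatroid argument; it simply invokes Lemma~\ref{lemma:generalachievable} with $U=\varnothing$ and $(X_1,\dots,X_M)$ independent and uniform. In the ADT model this evaluates exactly:
\[
I(X_{\cal S};Y_1,\dots,Y_K\mid X_{{\cal S}^c})=H(G_{\cal S}X_{\cal S})={\sf rank}(G_{\cal S}),
\qquad
I(X_1,\dots,X_M;Y_k)=H(Y_k)=\max_m n_{mk},
\]
so the achievable region of Lemma~\ref{lemma:generalachievable} already coincides with~(\ref{eq:ADTbound1})--(\ref{eq:ADTbound2}). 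The block-Markov/decode-and-forward machinery, backward decoding, and the $O(1/B)$ overhead are all absorbed in that lemma; no additional bookkeeping is needed.

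Your proposed route---prove submodularity of ${\cal S}\mapsto{\sf rank}(G_{\cal S})$, identify vertices of the resulting polymatroid intersected with the $K$ sum-rate half-spaces, exhibit explicit level assignments at those vertices, then time-share---is a legitimate alternative and would give a more ``operational'' picture in the spirit of Fig.~\ref{fig:example}. But it is strictly more work: the obstacle you flag (handling all $2^M-2$ subset constraints and $K$ receiver constraints simultaneously via a single level allocation) is real, and you would still need block-Markov coding with feedback to make the vertices achievable, which is exactly what Lemma~\ref{lemma:generalachievable} packages. So you already named the right tool and then set it aside; the paper just uses it.
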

\begin{proof}
The achievability proof is immediate due to Lemma~\ref{lemma:generalachievable}.
The achievable region is maximized when $U=\varnothing$ and $(X_1,\cdots,X_M)$ are uniformly distributed and independent. Appendix~\ref{appendix:converseproof_ADT} contains the converse proof.
\end{proof}


\section{Function Computation}
\label{sec:function_computation}

As a by-product of Theorem~\ref{theorem:ADT}, we can find an interesting role of feedback for other communication scenarios such as computation in networks. To see this, consider an $(M,K) = (2,2)$ ADT multicast channel with feedback and parameters $n_{11} = n_{22} = 3$ and $n_{12} = n_{21} = 1$ (see Fig.~\ref{fig:example}). Suppose that both receivers wish to compute the same function of modulo-2 sums of two independent Bernoulli sources ($S_1, S_2$) generated at the two transmitters. The computing rate for decoding $S_1 \oplus S_2$ at all receivers is denoted $R_{\sf comp}$. Without feedback, the following cut-set based argument provides a bound on $R_{\sf comp}$:
\begin{align*}
N&(R_{\sf comp} - \epsilon_{N}) \leq I(S_1 \oplus S_2;Y_1^{N}) \\
&\leq I(S_1 \oplus S_2; Y_1^{N}, S_1) = I(S_1 \oplus S_2; Y_1^{N}| S_1,X_1^{N}) \\
&\leq H(Y_1^{N}|S_1, X_1^{N}) \leq \sum H(Y_{1i}|X_{1i}),
\end{align*}
where the equality follows from the fact that $S_1 \oplus S_2$ is independent of $S_1$. For the particular ADT example, $H(Y_1|X_1) \leq 1$ and $H(Y_2|X_2) \leq 1$, from which  $R_{\sf comp} \leq 1$. On the other hand, the example in Fig.~\ref{fig:example} shows the achievability of $(\frac{3}{2},\frac{3}{2})$, thus yielding $R_{\sf comp}^{\sf FB} \geq \frac{3}{2}$. Therefore, feedback can increase rates for computation. Our future work is to extend this example to larger classes of networks.


\section{Conclusion}
\label{sec:conclusion}

We established the feedback capacity region of the Gaussian multicast channel with $M$ transmitters and $K$ receivers to within $\log \left\{ 2 (M-1) \right\}$ bits/s/Hz per transmitter of the cutset bound universally over all channel parameters. 
We characterized the exact feedback capacity region of the ADT model, and observed a feedback gain for function computation.

Our future work is along several new directions: (1) Improving our coding scheme based on Cover-Leung to incorporate ideas from~\cite{BrossLapidoth:it05, Venkataramanan:it11,Ozarow:it,Massimo:LQG}; (2) Extending to more realistic scenarios where feedback is offered through rate-limited bit-piped links~\cite{AlirezaSuhAves} or a corresponding backward channel~\cite{SuhTse:isit12}; (3) Exploring the role of feedback for function computation.

\appendices

\section{Proof of Lemma~\ref{lemma:achievablerateregion}}
\label{appendix:proofoflemma}

\textbf{Codebook Generation:} Fix a joint distribution $p(u) p(x_1|u)p(x_2|u)$.
First generate $2^{N(R_{1} +R_{2}) }$ independent codewords $u^N(j,l)$, $j \in \{1, \cdots, 2^{NR_{1}} \}$, $l \in \{1, \cdots, 2^{NR_{2}} \}$, according to $\prod_{i=1}^{N} p(u_i)$. For each codeword $u^N(j,l)$, encoder 1 generates $2^{NR_{1}}$ independent codewords $x_1^{N}((j,l),s)$, $s \in \{1, \cdots, 2^{NR_{1}} \}$, according to $\prod_{i=1}^{N} p(x_{1i}|u_i)$. Similarly, for each codeword $u^{N}(j,l)$, encoder 2 generates $2^{NR_{2}}$ independent codewords $x_2^{N}((j,l),q)$, $q \in \{1, \cdots, 2^{NR_{2}} \}$, according to $\prod_{i=1}^{N} p(x_{2i}|u_i)$.


\textbf{Encoding and Decoding:} We employ block Markov encoding with a total size $B$ of blocks. Focus on the $b$th block transmission. With feedback $(y_1^{N,(b-1)}, \cdots, y_K^{N,(b-1)})$, transmitter 1 tries to decode the message $\hat{w}_{2}^{(b-1)} = \hat{q}$ (sent from transmitter 2 in the $(b-1)$th block). In other words, we find the unique $\hat{q}$ such that
\begin{align*}
\begin{split}
&\left( u^N \left( w_{1}^{(b-2)}, \hat{w}_{2}^{(b-2)} \right), x_1^N \left( ( w_{1}^{(b-2)}, \hat{w}_{2}^{(b-2)}), w_{1}^{(b-1)} \right), \right. \\
& \left. \;\; x_2^N \left( (w_{1}^{(b-2)}, \hat{w}_{2}^{(b-2)}), \hat{q} \right) , y_1^{N,(b-1)}, \cdots, y_K^{N,(b-1)} \right) \in A_{\epsilon}^{(N)},
\end{split}
\end{align*}
where $A_{\epsilon}^{(N)}$ indicates the set of jointly typical sequences.
Note that transmitter 1 already knows its own messages $(w_{1}^{(b-2)}, w_{1}^{(b-1)})$. We assume that $\hat{w}_{2}^{(b-2)}$ is correctly decoded from the previous block $(b-1)$. The decoding error occurs if one of two events happens: (1) there is no typical sequence; (2) there is another $\hat{w}_{2}^{(b-1)}$ such that it is a typical sequence. By AEP, the first error probability becomes negligible as $N$ tends to infinity. By the packing lemma in~\cite{ElGamalKim:Book,CoverThomas}, the second error probability becomes arbitrarily small (as $N$ tends to infinity) if
\begin{align}
\label{eq-R2c-constraint}
R_{2} \leq I(X_2;Y_1,\cdots,Y_K|X_1,U).
\end{align}
Based on $(w_{1}^{(b-1)},\hat{w}_{2}^{(b-1)})$, transmitter 1 generates a new message $w_{1}^{(b)}$ and then sends $x_1^N \left( (w_{1}^{(b-1)}, \hat{w}_{2}^{(b-1)} ),w_{1}^{(b)} \right)$. Similarly transmitter 2 decodes $\hat{w}_{1}^{(b-1)}$, generates $w_{2}^{(b)}$ and then sends $x_2^N \left((\hat{w}_{1}^{(b-1)}, w_{2}^{(b-1)} ),w_{2}^{(b)} \right)$.

Each receiver waits until total $B$ blocks have been received and then does backward decoding.
Notice that a block index $b$ starts from the last $B$ and ends to $1$. For block $b$, receiver $k$ finds the unique pair  $(\hat{j},\hat{l})$ such that
\begin{align*}
\begin{split}
&\left( u^N \left( \hat{j}, \hat{l} \right), x_1^N \left( ( \hat{j}, \hat{l}) , \hat{w}_{1}^{(b)} \right), \right. \\
&\left. \qquad  x_2^N \left( (\hat{j}, \hat{l}), \hat{w}_{2}^{(b)} \right) , y_k^{N,(b)}  \right) \in A_{\epsilon}^{(N)},
\end{split}
\end{align*}
where we assumed that a pair of messages $(\hat{w}_{1}^{(b)}, \hat{w}_{2}^{(b)})$ was successively decoded from block $(b+1)$. Similarly other receivers follow the same decoding procedure.

\textbf{Error Probability:} By symmetry, we consider the probability of error only for block $b$ at receiver $k$. We assume that $(w_{1}^{(b-1)},w_{2}^{(b-1)}) = (1,1)$ was sent through block $(b-1)$ and block $b$; and there was no backward decoding error from block $B$ to $(b+1)$, i.e., $(\hat{w}_{1}^{(b)}, \hat{w}_{2}^{(b)})$ are successfully decoded.

Define an event:
\begin{align*}
E_{jl} = \left\{ \left( u^{N}(j,l), x_1^{N}((j,l),\hat{w}_{1}^{(b)}), \right. \right. \\
\left. \left. \;\; x_2^{N}((j,l),\hat{w}_{2}^{(b)}), y_k^{N,(b)} \right)  \in A_{\epsilon}^{(N)} \right\}.
\end{align*}
By AEP, the first type of error becomes negligible. Hence, we focus only on the second type of error. Using the union bound, we get
\begin{align}
\begin{split}
\label{eq-errorprobability}
& \textrm{Pr}  \left( \bigcup_{(j,l)  \neq (1,1) } E_{jl}   \right) \leq \sum_{j \neq 1, l \neq 1}\textrm{Pr}(E_{jl} ) \\
& + \sum_{j \neq 1, l = 1}\textrm{Pr}(E_{j1} ) + \sum_{j = 1, l \neq 1}\textrm{Pr}(E_{1l} )  \\
& \leq 3 \cdot 2^{N(R_{1}+R_{2} - I(U,X_1,X_2;Y_k) + 3 \epsilon)}.
\end{split}
\end{align}
Here note that $( j \neq 1, l \neq 1)$ is the worst case, dominating the other two cases. The number $3$ in the second inequality reflects all of these cases. Hence, the error probability can be made arbitrarily small if
\begin{align}
\label{eq-R1R2constraint}
    R_{1}+R_{2} \leq I(X_1, X_2; Y_k), \forall k.
\end{align}
From (\ref{eq-R2c-constraint}) and (\ref{eq-R1R2constraint}), we complete the proof.

\section{Proof of Corollary~\ref{cor:constantgap}}
\label{appendix:constantgap}

Let $\delta_{\cal S} =(\ref{eq:outerbound1_Mtx}) - (\ref{eq:achievablerate1_Mtx})$. Set $\rho = \min_{m,n} \rho_{ij}$.
We then get
\begin{align*}
&\delta_{\cal S}  = \log \left|
I_{K} + G_{\cal S} K_{{\cal S} | {\cal S}^c } G_{\cal S}^* \right | -
\log \left|
I_{K} + (1-\rho) G_{\cal S} G_{\cal S}^* \right |  \\
&\overset{(a)}{\leq} \log \left|
I_K + |{\cal S}|(1- \rho^2) G_{\cal S} G_{\cal S}^* \right | - \log \left|
I_{K} + (1-\rho) G_{\cal S} G_{\cal S}^* \right | \\
&\overset{(b)}{=} \log \left|
I_{|{\cal S}|} + |{\cal S}|(1- \rho^2) G_{\cal S}^* G_{\cal S} \right | - \log \left|
I_{|{\cal S}|} + (1-\rho) G_{\cal S}^* G_{\cal S} \right | \\
&\leq \log \left| (1+\rho) |{\cal S}| I_{|{\cal S}|}  \right | \\
&= |{\cal S}| \log \left| (1+\rho) |{\cal S}|  \right | \\
&\leq |{\cal S}| \log \left\{ 2 (M-1) \right \}
\end{align*}
where $(a)$ follows from Claim 1 (see below); and $(b)$ follows from the determinant identity $\left|I_m + A B \right| = \left|I_n + BA \right|$. Therefore, the gap per transmitter is upper-bounded by
\begin{align}
\frac{\delta_{\cal S}}{|{\cal S}|} \leq \log \left\{ 2 (M-1) \right\}.
\end{align}

\begin{claim}
$K_{{\cal S} | {\cal S}^c } \preceq  |{\cal S}|(1-\rho^2)  I_{|{\cal S}|}.$
\end{claim}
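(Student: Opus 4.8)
The plan is to reduce conditioning on the whole complementary block $X_{\mathcal{S}^c}$ to conditioning on a single component, and then apply an elementary trace bound. Since $\mathcal{S} \subsetneq \{1,\cdots,M\}$, the complement $\mathcal{S}^c$ is nonempty; fix any $j \in \mathcal{S}^c$. Conditional covariance matrices (equivalently, Schur complements of the joint covariance $K_X$) are monotone under enlarging the conditioning set: by the quotient identity for Schur complements, $K_{\mathcal{S}\mid\mathcal{S}^c}$ is itself the Schur complement, within $K_{\mathcal{S}\mid X_j}$, of the block corresponding to $X_{\mathcal{S}^c\setminus\{j\}}$, and a Schur complement of a positive semidefinite matrix is $\preceq$ the original. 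Hence $K_{\mathcal{S}\mid\mathcal{S}^c} \preceq K_{\mathcal{S}\mid X_j}$, and it suffices to prove $K_{\mathcal{S}\mid X_j} \preceq |\mathcal{S}|(1-\rho^2) I_{|\mathcal{S}|}$.

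Next I would read off the diagonal entries of $K_{\mathcal{S}\mid X_j}$. Using the normalization $P_m = E[|X_m|^2] = 1$ for all $m$, the LMMSE (Schur-complement) formula gives, for each $m \in \mathcal{S}$,
$[K_{\mathcal{S}\mid X_j}]_{mm} = E[|X_m|^2] - |E[X_m X_j^*]|^2 / E[|X_j|^2] = 1 - \rho_{mj}^2.$
Since $\rho_{mj} \geq \rho := \min_{m',n'} \rho_{m'n'}$, each diagonal entry is at most $1-\rho^2$, so ${\rm tr}(K_{\mathcal{S}\mid X_j}) \leq |\mathcal{S}|(1-\rho^2)$.

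Finally, $K_{\mathcal{S}\mid X_j} \succeq {\bf 0}$, so every eigenvalue is nonnegative and bounded above by the trace; therefore $K_{\mathcal{S}\mid X_j} \preceq {\rm tr}(K_{\mathcal{S}\mid X_j})\, I_{|\mathcal{S}|} \preceq |\mathcal{S}|(1-\rho^2)\, I_{|\mathcal{S}|}$, and chaining with the monotonicity step gives the claim. The only genuinely delicate point is the first step, i.e.\ justifying $K_{\mathcal{S}\mid\mathcal{S}^c}\preceq K_{\mathcal{S}\mid X_j}$; everything afterward is a one-line trace estimate. If one prefers to avoid invoking the Schur-complement quotient identity explicitly, the same inequality follows from the observation that $K_{\mathcal{S}\mid\mathcal{S}^c}$ is the error covariance of the best linear estimator of $X_{\mathcal{S}}$ from the full vector $X_{\mathcal{S}^c}$, which is dominated in the positive semidefinite order by the error covariance of the best linear estimator from the sub-collection $X_j$ alone.
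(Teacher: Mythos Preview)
Your proof is correct and uses the same two ingredients as the paper---the trace bound $A \preceq (\mathrm{tr}\,A)\,I$ for positive semidefinite $A$, and monotonicity of conditional covariance under shrinking the conditioning set---just in swapped order. The paper applies the trace bound \emph{first} to $K_{\mathcal{S}\mid\mathcal{S}^c}$ itself, writes the trace as $\sum_{m\in\mathcal{S}} K_{X_m\mid\mathcal{S}^c}$, and only then invokes the \emph{scalar} inequality $K_{X_m\mid\mathcal{S}^c}\le K_{X_m\mid X_n}$ componentwise for each $m$ (allowing a different $n\in\mathcal{S}^c$ per coordinate, though this extra freedom is not actually exploited once $\rho=\min_{m,n}\rho_{mn}$ is substituted). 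Your route is equally valid; the paper's ordering is marginally more economical in that it never needs the full matrix-level inequality $K_{\mathcal{S}\mid\mathcal{S}^c}\preceq K_{\mathcal{S}\mid X_j}$ that you justify via the Schur-complement quotient identity.
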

\begin{proof}
Starting with the fact that any covariance matrix is positive semidefinite, we get
\begin{align*}
K_{{\cal S} | {\cal S}^c } & \preceq {\sf trace} \left( K_{{\cal S} | {\cal S}^c } \right) I_{|{\cal S}|} \\
& =\sum_{m \in {\cal S}} \left \{ K_{ X_m | {\cal S}^c } \right \} I_{|{\cal S}|}  \overset{(a)}{\preceq} \sum_{m \in {\cal S}} \max_{n \in {\cal S}^c} \left \{  K_{ X_m | X_n } \right \} I_{|{\cal S}|} \\
& = \sum_{m \in {\cal S}} \left( 1 - \min_{n \neq m} \rho_{mn}^2 \right) I_{|{\cal S}|}  \overset{(b)}{\preceq} |{\cal S}| \left( 1 - \rho^2 \right) I_{|{\cal S}|}
\end{align*}
where $(a)$ follows from the fact that $K_{ X_m | {\cal S}^c } \leq   K_{ X_m | X_n }$ for some $n \in {\cal S}^c$; and $(b)$ is because we set $\rho = \min_{m,n} \rho_{mn}$.
\end{proof}

Similarly we define $\delta_{\sf sum} = (\ref{eq:outerbound12_Mtx}) - (\ref{eq:achievablerate12_Mtx})$. We then get
\begin{align*}
\delta_{\sf sum} & \leq \log \left ( 1 +  \left\{ \sum_{m=1}^{M} \sqrt{ {\sf SNR}_{mk} } \right \}^2 \right ) - \log \left (1 + \sum_{m=1}^{M} {\sf SNR}_{mk}   \right ) \\
&\overset{(a)}{\leq}\log \left ( 1 +   M  \sum_{m=1}^{M}  {\sf SNR}_{mk}   \right ) - \log \left (1 + \sum_{m=1}^{M} {\sf SNR}_{mk}   \right ) \\
& {\leq} \log M  \\
\end{align*}
where $(a)$ follows from the Cauchy-Schwarz inequality. Therefore, the gap per transmitter is upper-bounded by
\begin{align}
\frac{\delta_{\sf sum}}{M} \leq \frac{ \log M}{M} < \log \left\{ 2 (M-1) \right \}.
\end{align}
This completes the proof.

\section{Converse Proof of Theorem~\ref{theorem:ADT}}
\label{appendix:converseproof_ADT}

First, consider (\ref{eq:ADTbound2}). Starting with Fano's inequality, we get
\begin{align*}
\begin{split}
N& \left( \sum_{m=1}^{M} R_m - \epsilon_N \right) \leq I(W_1,\cdots, W_M;Y_k^{N}) \\
& \overset{(a)}{\leq} \sum  H(Y_{ki}) \overset{(b)}{\leq} N \max \{ n_{1k}, \cdots, n_{Mk} \}
\end{split}
\end{align*}
where $(a)$ follows from the fact that conditioning reduces entropy; and $H(Y_{ki})$ is maximized when $(X_{1i},\cdots,X_{Mi})$ are uniformly distributed and independent.

Next, consider (\ref{eq:ADTbound1}). Let $W_{\cal S} := \left\{ W_m: m \in {\cal S}\right \}$. Starting with Fano's inequality, we get
\begin{align*}
\begin{split}
N& \left( \sum_{m \in {\cal S}} R_m - \epsilon_N \right)  \leq I(W_{\cal S};Y^N, W_{{\cal S}^c}) \\
&\overset{(a)}= I(W_{\cal S};Y^N | W_{{\cal S}^c})  \overset{(b)}= \sum H(Y_{i}| W_{{\cal S}^c} ,Y^{i-1},X_{ {\cal S}^c }^{i}  ) \\
&\overset{(c)}{\leq} \sum H(Y_{i}| X_{ {\cal S}^c,i }  ) \overset{(d)}{\leq} N {\sf rank} (G_{\cal S} ) \\
\end{split}
\end{align*}
where ($a$) follows from the fact that $W_{\cal S}$ and $W_{{\cal S}^c}$ are independent;
$(b)$ follows from the fact that $X_{ {\cal S}^c }^{i}$ is a function of $(W_{{\cal S}^c} ,Y_1^{i-1},\cdots, Y_K^{i-1})$; ($c$) follows from the fact that conditioning reduce entropy; $(d)$ follows from the fact that $H(Y_{i}| X_{ {\cal S}^c,i }  )$ is maximized when $(X_{1i},\cdots,X_{Mi})$ are uniformly distributed and independent.

\bibliographystyle{ieeetr}
\bibliography{bib_multicastfeedback}

\end{document}